\documentclass[sigconf]{acmart}

\usepackage[english]{babel}
\usepackage{blindtext}
\usepackage{amsmath}
\usepackage{subcaption}
\usepackage[normalem]{ulem}
\usepackage{siunitx}
\usepackage{amsthm,amsmath}
\usepackage{graphicx}
\usepackage[most]{tcolorbox}
\usepackage[subtle]{savetrees}  
\newcommand{\ie}{\emph{i.e.,}\xspace}
\newcommand{\eg}{\emph{e.g.,}\xspace}
\newcommand{\etc}{\emph{etc.}\xspace}

\newcommand{\myitem}[1]{\vspace*{0.02in}\noindent\textbf{#1}}

\newtheoremstyle{sltheorem}
{}
{}
{\slshape}
{}
{\bfseries}
{.}
{ }
{}
\theoremstyle{sltheorem}
\newtheorem{theorem}{Theorem}

\tcolorboxenvironment{theorem}{
  enhanced,
  breakable,
  colback=gray!10,      % background
  colframe=gray!60,     % border
  boxrule=0.6pt,
  arc=2mm,
  left=1.2mm,
  right=1.2mm,
  top=1.0mm,
  bottom=1.0mm,
}

\makeatletter
\renewcommand{\proofname}{\bfseries Proof}
\renewenvironment{proof}[1][\proofname]{%
  \par\pushQED{\qed}%
  \normalfont \topsep6\p@\@plus6\p@\relax
  \trivlist
  \item[\hskip\labelsep\bfseries #1\@addpunct{.}]%
  \ignorespaces
}{%
  \popQED\endtrivlist\@endpefalse
}
\makeatother

\makeatletter
\newcommand{\mysubsubsection}{%
  \@startsection{subsubsection}{3}{\z@}%
    {2.0ex \@plus .2ex \@minus .2ex}% space above (tweak if you want)
    {1.0ex \@plus .2ex}% space below (positive => displayed heading, not run-in)
    {\bfseries}% font/style (subsection-like, but smaller)
}
\makeatother

\renewcommand\footnotetextcopyrightpermission[1]{} % removes footnote with conference info
\setcopyright{none}
\acmDOI{}

\acmISBN{}

\acmPrice{}

\begin{document}
\title{\emph{To Reconfigure or Not to Reconfigure:}\\ Optimizing All-to-All Collectives in Circuit-Switched Photonic Interconnects}

%\titlenote{Produces the permission block, and copyright information}
%\subtitle{Extended Abstract}

% \author{Paper \# 620, 12 pages}
\author{Anchengcheng Zhou}
% \authornote{Note}
% \orcid{1234-5678-9012}
\affiliation{%
  \institution{Princeton University}
  % \streetaddress{Address}
  % \city{City} 
  % \state{State} 
  % \postcode{Zipcode}
}
\email{ann.zhou@princeton.edu}

\author{Vamsi Addanki}
\affiliation{%
  \institution{Purdue University}
}
\email{vaddank@purdue.edu}

\author{Maria Apostolaki}
\affiliation{%
  \institution{Princeton University}
}
\email{apostolaki@princeton.edu}

% The default list of authors is too long for headers}
\renewcommand{\shortauthors}{X.et al.}

\sloppy

%!TEX ROOT=../paper.tex
\begin{abstract}

All-to-all collective communication is a core primitive in distributed machine learning and high-performance computing. At the server scale, the communication demands of these workloads are increasingly outstripping the bandwidth and energy limits of electrical interconnects, driving a growing interest in photonic interconnects.
However, leveraging these interconnects for all-to-all communication is nontrivial. %, as they rely on circuit switching with reconfiguration overheads.
The core challenge lies in jointly optimizing a sequence of topologies and flow schedules, reconfiguring only when the transmission savings from traversing shorter paths outweigh the reconfiguration cost.
Yet the search space of this joint optimization is enormous.
Existing work sidesteps this challenge by making unrealistic assumptions on reconfiguration costs so that it is never or always worthwhile to reconfigure.

In this paper, we show that any candidate sequence of topologies and flow schedules can be expressed as a sum of adjacency matrices and their powers.
This abstraction captures the entire solution space and yields a lower bound on all-to-all completion time. 
% Building on this formulation, we identify a family of topology sequences with strong symmetry and high expansion, which admits\maria{allows?} efficient schedules that can be generated via simple flow-splitting heuristics with low computational overhead.
% Building on this formulation, we identify a family of topology sequences with strong symmetry and high expansion, where bandwidth-efficient schedules exist, and which our efficient algorithm can generate. %via simple flow-splitting heuristics.
Building on this formulation, we identify a family of topology sequences with strong symmetry and high expansion that admits bandwidth-efficient schedules, which our algorithm constructs with low computational overhead.
Together, these insights allow us to efficiently construct near-optimal solutions, effectively avoiding enumeration of the combinatorial design space.
% multi-hop forwarding can be expressed via matrix powers, and use this observation to derive an abstraction that captures the entire solution space.
% This abstraction exposes the structure of the space and allows us to reason without enumerating individual solutions.
% In particular, we identify a high-performing region characterized by topology sequences exhibiting strong symmetry and high expansion; within this region, schedules can be constructed efficiently while achieving performance close to the global optimum.
Evaluation shows that our approach reduces all-to-all completion time by up to 44\% on average across a wide range of network parameters, message sizes and workload types.

\end{abstract}

\maketitle
\thispagestyle{plain}
\pagestyle{plain}

%!TEX ROOT=../paper.tex
\section{Introduction}

All-to-all is a backbone primitive in machine learning (ML) and high-performance computing (HPC), supporting multidimensional Fast Fourier Transform~\cite{10.1145/2555243.2555249,9652837} that is central to spectral solvers~\cite{CHATTERJEE201877,doi:10.1137/19M1303848}, imaging~\cite{9355267} and scientific simulations~\cite{doi:10.1137/11082748X}, and distributed training workloads \eg Deep Learning Recommendation Model~\cite{10.1145/3470496.3533727} and Mixture-of-Experts (MoE)~\cite{liao2025mixnet,NEURIPS2022_2f00ecd7,lepikhin2021gshard}. % involves repeated global transposes % operations \eg % that exchanges large embeddings %  that route tokens among all experts across GPUs
In many of these applications, the global data movement dominates the end-to-end completion time. For example, all-to-all communications account for 33\% to 55\% of the training iteration time in MoE~\cite{liao2025mixnet}.

% Yet all-to-all communication is particularly challenging, as its dense traffic pattern stresses the network globally and spawns many concurrent flows, easily creating hotspots and incasts~\cite{}.
Yet all-to-all communication is particularly challenging as the dense traffic and concurrent flows readily create hotspots and incasts.
To speed up these communications, data centers today increasingly deploy multi-GPU servers with packet-switched scale-up interconnects~\cite{nvidia2023superpod,10664412}, allowing a substantial fraction of communications to complete within a server.
However, packet-switched interconnects are increasingly constrained by hardware limits in \eg switch area, power needs and heat dissipation, and the slowing of bandwidth growth as the scaling by Moore's Law tapers off~\cite{10.1145/3387514.3406221}.

Circuit-switched (reconfigurable) photonic interconnects offer an attractive alternative: by moving switching into the optical domain,  
they deliver high bandwidth and energy efficiency. This has motivated many optical circuit switch designs~\cite{10.1145/3387514.3406221,10.1145/3651890.3672273,khani2021sipml,10.1145/3696348.3696856} that provide direct connections between communicating GPUs. %they avoid large packet crossbars, per-bit packet processing, and repeated SerDes overheads~\cite{}, % that vary in hardware and reconfigurability
However, these direct connections come at a cost: reconfiguration takes time, and per-reconfiguration delay varies from \si{\nano\second}--to--\si{\micro\second}~\cite{10.1145/3387514.3406221,10.1145/3651890.3672273,10.1145/3696348.3696856} to \si{\milli\second}~\cite{polatis} and \si{\minute}~\cite{zhao2025direct-connect}. This implies that at any moment, each host can use only a small number of optical circuits and must wait for reconfiguration to establish new ones. %Per-reconfiguration overheads vary by orders of magnitude from \si{\nano\second}–to-\si{\micro\second}~\cite{} to \si{\milli\second}~\cite{} and \si{\minute}~\cite{}.

In this paper, we seek to optimize all-to-all collective completion time in circuit-switched photonic interconnects by deciding when and how to reconfigure.
%, where performance hinges critically on finding a sequence of reconfigurations that minimizes the combined cost of data transmission and reconfiguration overhead.
%Deciding when and how to reconfigure requires a \emph{joint optimization} over all realizable topologies and flow schedules. % on the optical switch. 
%\maria{The prev two sentences are repetitive? }
Intuitively, only flows whose transmission savings can offset the reconfiguration cost should trigger an additional reconfiguration. But this decision cannot be made on a per-flow basis in isolation, because many flows share the same topology and interact.
Furthermore, the decisions on which topologies to use and what flows to run on these topologies are also tightly coupled: the chosen topology determines the feasible routes and their hop lengths, while the schedule (which flows use which routes in which stage) in turn determines which topologies are more useful. Even a small change in either can shift the best choice of the other. 
Since topology sequences and schedules are interdependent, we must jointly optimize topologies and schedules across all stages, rather than making independent, local decisions. While this joint optimization can be formulated, solving it by enumeration is intractable, as the search space spans all possible topology sequences, schedules, and even flow orderings, since all-to-all communication imposes no data dependencies that restrict ordering and results in combinatorial growth in the number of possible schedules. Moreover, this optimization must be repeated for each traffic instance and each reconfiguration overhead, rendering enumeration-based approaches prohibitively expensive.
%In fact, finding just a single optimal topology and its corresponding optical schedule is already NP-hard~\cite{foerster2018nphard} \ann{check}. 
Existing work \emph{reduces} the problem to make it tractable, either by assuming extreme reconfiguration delays that make reconfiguring always or never worthwhile~\cite{zhao2025direct-connect,10.1145/3748273.3749210}, or compressing along one dimension and optimizing only topology~\cite{liao2025mixnet} or collective schedule~\cite{zhao2025direct-connect,basu2024efficienta2a,liu2024teccl,cai2021sccl,shah2023taccl}. %, or restricts only to related but orthogonal settings~\cite{}. 
These reductions miss the key opportunity of \emph{reconfiguring during the collective}, which requires a joint optimization with realistic per-reconfiguration delay. %, leading to under-performance.

% Existing work \emph{reduces} the decision problem of when and how to reconfigure to make it tractable, but these simplifications lead to under-performance. 
% One line of work sidesteps the decision by making unrealistic assumptions on per-reconfiguration overhead: \cite{} assume the reconfiguration overhead to be effectively infinite so reconfiguring is never worthwhile, reducing the problem to static topology and routing design, while \cite{} assumes negligible reconfiguration overhead (or a fully connected fabric), reducing the problem to minimum Birkhoff-von Neumann decomposition. 
% Other approaches optimize only part of the stack.
% Collective synthesis either assumes a fixed topology and solves a partial optimization~\cite{}, or abstracts away the underlying topology~\cite{}.
% MixNet~\cite{liao2025mixnet} greedily constructs a single topology from a known traffic matrix whereas general direct-connect topology designs~\cite{} do not optimize for collective completion.
% Finally, several works are adjacent but orthogonal.
% TopoOpt~\cite{} and \cite{} exploit all-reduce-specific structure, hybrid data center designs~\cite{} fall back to packet switching for difficult traffic and often disallow multi-hop forwarding over optical circuits, and \cite{} assumes traffic is not fully transparent.

\begin{figure}[tp]
\centering
\includegraphics[width=0.95\columnwidth]{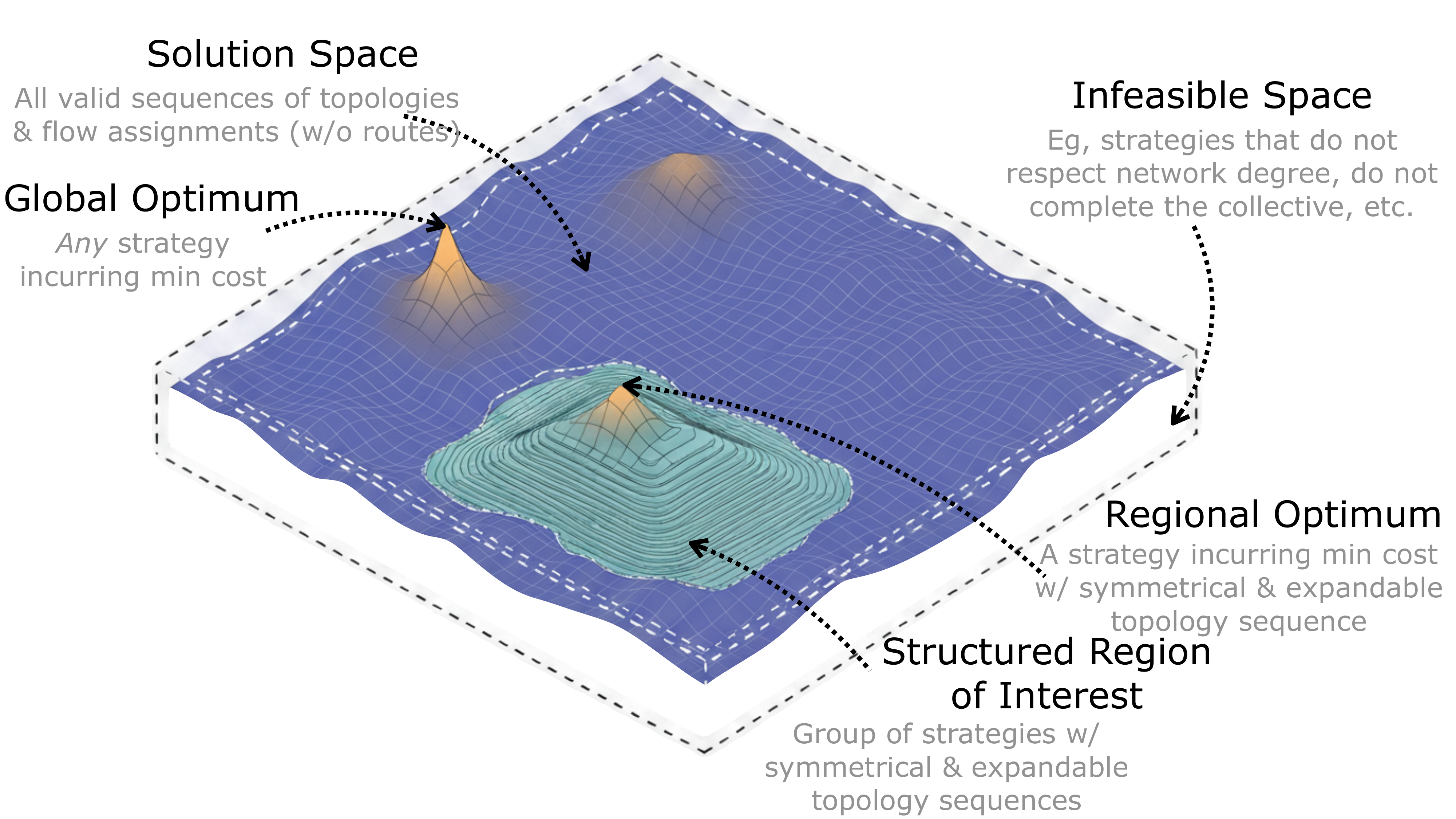}
\vspace{-0.1in}
\caption{Our approach maps the space of feasible solutions (strategies), \ie all possible sequences of topologies and flow assignments that map each flow to one of these topologies that satisfies the collective but without lower-level routing details, per network-number of reconfiguration instance,
%combinations of topology sequence topology-flow-mapping, routes,,,,), and 
identifies a structured region (characterized by topology sequences that exhibit strong symmetry and high expansion) in the space where the regional optimum is easy to find and close to the global optimum. Figure shown for illustration only (not to scale).
}
\label{fig:intro}
\vspace{-0.2in}
\end{figure}

%In this paper, we depart from the trajectory of existing work \maria{but we need to say how} and instead introduce an abstraction that captures the entire solution space of reconfiguration sequences, enabling reasoning on when and how to reconfigure without explicit enumeration. 
% \maria{Instead of following the trajectory of prior work that immediately searches for a concrete solution, we introduce an abstraction that maps the entire design space and allows us to reason about topology sequences without explicitly accounting for flow-level details that would require enumeration.}
Instead of following the trajectory of prior work that searches for concrete solutions, we introduce an abstraction that maps the entire space of feasible solutions to the all-to-all collective without explicitly modeling lower-level path details, thereby allowing us to reason about families of solutions without enumeration.
First, we show that the sequence of topologies and the flow schedules that will run on these topologies can be expressed as a sum of matrices by modeling multi-hop forwarding as matrix exponents. This matrix-based representation yields an explicit formulation for total collective completion time, 
%parameterized by the number of reconfigurations, 
cleanly exposing the tradeoff between transmission and reconfiguration costs, 
which we then use to derive a lower bound (\ie global optimum) on the completion time for all-to-all. %for any given set of network parameters. 
% \maria{the next two sentences might not be needed.}
% Moreover, this abstraction guides us to reason about regions of solutions together, abstracting away lower-level routing details.
% Next, we observe that the sequence of topologies underpins performance, as a well-chosen sequence simplifies the joint optimization to just constructing efficient schedules \ann{will this be confused with existing work?}. \maria{short answer: in your family of sequences you have an efficient algorithm to find the schedule? But should you not say this later? You could say your abstraction models all schedules of given topology sequence (like an abstart strategy). Then the bound says there is no schedule better than X, then you can find one of the concrete strategies that is X or close to X  with an efficient algorithm (i.e. not optimization) }
Next, leveraging the abstraction, we identify a region of the solution space characterized by topology sequences that exhibit strong intra- and inter-topology symmetry and high expansion. Within this region, (regionally) optimal schedules can be computed efficiently by a simple algorithm that combines flow grouping and shortest path routing. Furthermore, we show that the regional optimum is close to the global optimum.
Figure~\ref{fig:intro} summarizes the high-level idea of our approach. % that maps the entire solution space and then constrains to a structured region in the space where the regional optimum is easy to find and approximates the global optimum. % Also, note that our approach to generic to network parameters and workload types.

Evaluation shows that enabling reconfiguration during collectives reduces the total collective completion time by up to 44\% \footnote{An average taken based on Figures~\ref{fig:eval_main} and~\ref{fig:eval_main_size}.} on average across a wide range of network parameters, data sizes and workload types. We further validate our chosen region by demonstrating a small optimality gap between our solution and the global optimum. Furthermore, we provide quantitative evidence and intuition for when reconfiguration is beneficial.
The substantial reduction in all-to-all completion time can translate into significant cost savings for the workloads in today's ML and HPC where communication is a major contributor to overall expense.
%!TEX ROOT=../paper.tex
\section{Motivation}

We first define our problem setup (\S\ref{sec:motivation_setup}) and discuss 
% how existing work makes overly simplistic assumptions and under-performs
% vamsi: removed that, we don't want to make the reader angry if they had prior work in the area....
the limitations of existing approaches (\S\ref{sec:motivation_limitation}). Next, we use a motivating example to demonstrate the missed opportunity to \emph{selectively} reconfigure \emph{during} an all-to-all collective via jointly optimizing topologies and flow schedules. However, such joint optimization is prohibitively expensive (\S\ref{sec:motivation_example}), motivating the need for bandwidth-efficient all-to-all schedules that adapt to reconfiguration overheads.
\vspace{-0.15in}

\subsection{Problem Setup} \label{sec:motivation_setup}

\begin{figure}[tp]
\centering
\includegraphics[width=0.65\columnwidth]{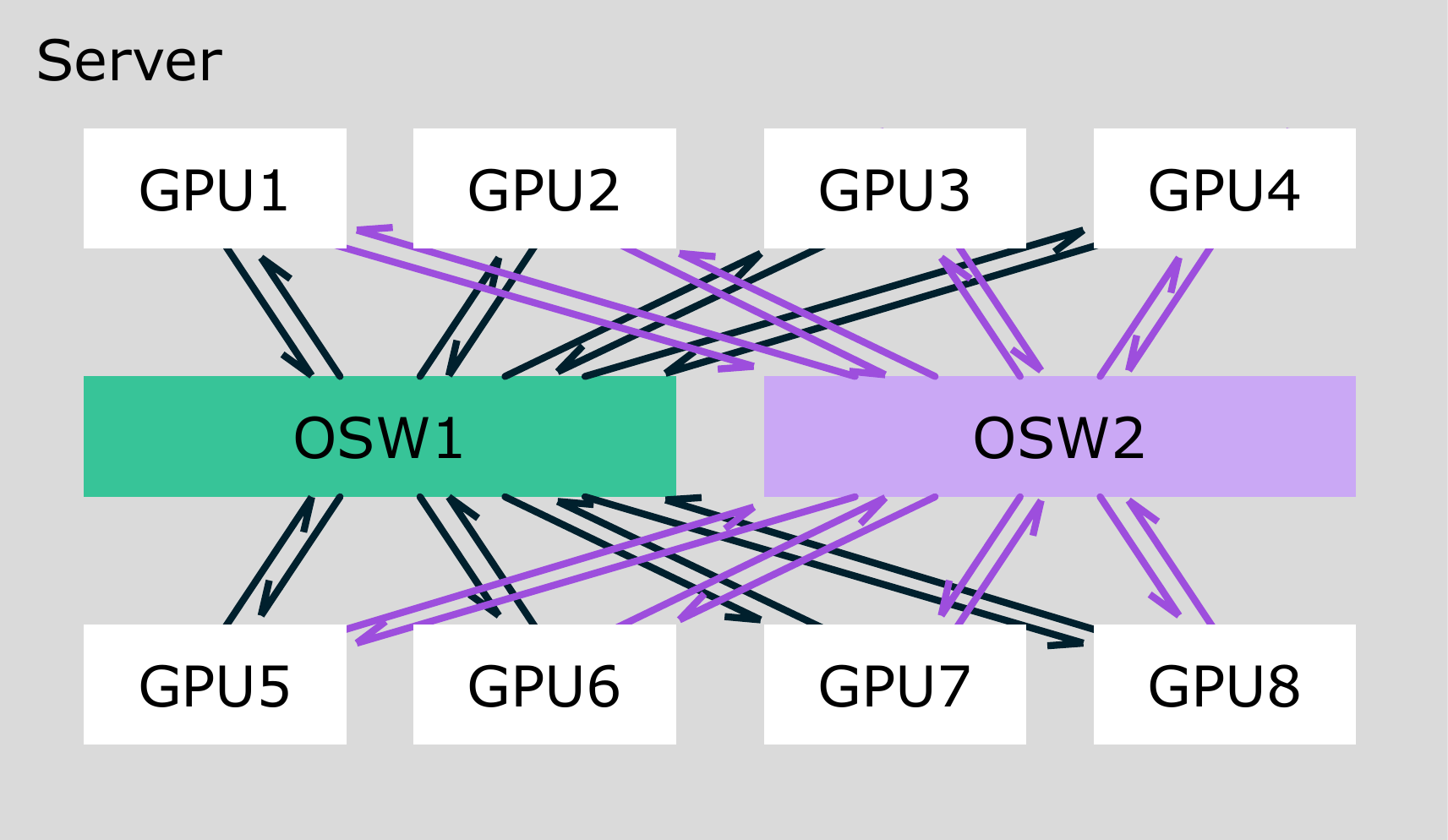}
\caption{An example scale-up network with $n=8$ GPUs interconnected by $k=2$ optical switches (OSW).}
\label{fig:motivation_setup}
\vspace{-0.15in}
\end{figure}

We consider a scale-up network with $n$ GPUs interconnected by $k$ optical switches, as shown in Figure~\ref{fig:motivation_setup}. Every GPU has both incoming and outgoing links to every switch. Therefore, at any time, the scale-up network realizes a directed graph of size $n$ and degree-$k$\footnote{Degree is defined as the number of outgoing links per GPU. We use "topology" and "graph" interchangeably.}.
The connections inside each optical switch (OSW) can be reconfigured over time, and each reconfiguration corresponds to a distinct size-$n$ degree-$k$ graph. Reconfiguration incurs a delay ranging from nanoseconds to microseconds~\cite{10.1145/3387514.3406221,10.1145/3651890.3672273,10.1145/3696348.3696856}, through milliseconds~\cite{polatis}, and up to several minutes~\cite{zhao2025direct-connect}, depending on hardware technology and port counts~\cite{liao2025mixnet,zhao2025direct-connect}. 
We target the all-to-all collective, where every GPU sends data to all other $n-1$ GPUs. We refer to each source-destination communicating pair as a \textbf{flow}. Each flow is divided into fixed-sized \textbf{chunks}, chunks traverse the network hop-by-hop using store-and-forward, and different flows may contain different numbers of chunks.

We define a \textbf{strategy} to be a sequence of topologies and the flow schedules that will run on these topologies,
where each topology is a size-$n$ degree-$k$ graph and each schedule consists of multiple rounds of flows. Figure \ref{fig:motivation_one} shows one example of a strategy, consisting of one topology and its associated schedule with seven rounds (ignore the bottom-row matrices for now). Each \textbf{round} routes multiple flows over one or multiple \textbf{hops}. It specifies, for each hop, which chunk is sent over which link.
Reconfiguration happens only after all flows in the current round complete. % \maria{or in the last round of the flow schedule associated with the topology}. 
For simplicity, we refer to the initial topology as the first reconfiguration. 
% We consider communication that proceeds in \emph{barrier-synchronized} manner: in the beginning of each round, all flows scheduled for that round are released at the same time, and the next round or reconfiguration begins only after all flows in the current round complete. \maria{we do not yet know that there are multiple rounds per topology, maybe move this to after you talk about the schedule?} For simplicity, we count the initial topology as the first reconfiguration.
% Each round  and specifies which flows are active and their routes in each round\maria{maybe an example would help here}; all flows across all rounds in all schedules sum up to the all-to-all collective.
% \maria{Alternative way of introducing round: In this work, we define a strategy as a sequence of rounds, where each round consists of a topology and a set of flows that are set to start on that topology at the same time. A topology is hence used in multiple rounds?}
We model completion time using the conventional $\alpha-\beta$ cost model~\cite{hockney1994alpha-beta-cost}: the cost of transmitting per-unit data over a link is $\alpha+\beta$, where $\alpha$ is the single-hop latency and $\beta$ is the inverse of link bandwidth. 
The \textbf{cost of a strategy} is the total collective completion time, which is the sum of reconfiguration time and data transmission time. The reconfiguration time is the per-reconfiguration overhead multiplied by the number of reconfigurations, and the transmission time is the sum across all rounds of the time required by the bottleneck flow to finish.
Our \emph{objective} is to find a strategy that minimizes the total cost i.e., the completion time of all-to-all, on a given network of $n$ GPUs, $k$ optical switches.
\vspace{-0.15in}

% \vspace{-0.2in}
\subsection{Limitations of Existing Work} \label{sec:motivation_limitation}
\vspace{-0.05in}

%\myitem{Existing work simplistically constrains the design space.}
The problem outlined above is essentially a joint optimization of topology and flow (communication) schedules for all-to-all. %  that no prior work solves fully. \vamsi{I think better to remove no prior work solves fully, just to be safe. The section title and rest of the text explains it anyway.}
One line of existing work compresses the joint optimization along one dimension and optimizes only one of the two, leading to suboptimal performance.
Collective algorithms (\eg recursive doubling~\cite{thakur2005recursivedoubling}, Bruck algorithm~\cite{bruck1994bruck} and SpreadOut~\cite{pjevsivac2007spreadout}) largely disregard the underlying topology, while communication libraries~\cite{nccl,rccl,deepep,cowan2023mscclang} are topology-aware when picking links but remain largely template- and heuristic-driven, and collective synthesis~\cite{shah2023taccl,liu2024teccl,cai2021sccl,cao2025syccl} takes the network topology as a fixed input.
MixNet~\cite{liao2025mixnet} greedily constructs a single topology from a given traffic matrix while general topology designs~\cite{besta2014slimfly,blach2024polarfly,kim2008dragonfly,singla2012jellyfish,valadarsky2016xpander} do not optimize for collective completion.
% \cite{basu2024efficienta2a} solves for communication schedules given a fixed topology and all-to-all communication.
\cite{bojja2016costly} acknowledges the difficulty of jointly optimizing topologies and flow schedules and reduces the problem to routing design when the sequence of topologies and flow assignments are given.
Another line of work sidesteps the decision of when and how to reconfigure by making unrealistic assumptions about the reconfiguration overhead. \cite{basu2024efficienta2a,zhao2025direct-connect} assume the per-reconfiguration overhead to be effectively infinite, so reconfiguring is never worthwhile, which reduces the problem to static topology design and standard routing design with multi-commodity flow. On the other end of the spectrum,  \cite{lei2025fast,kulkarni2017minbvn} assume negligible reconfiguration overhead (or a fully connected fabric), reducing the problem to a minimum Birkhoff-von Neumann (BvN) decomposition. 

Other existing work addresses related but orthogonal problems. 
SiP-ML~\cite{khani2021sipml} and TopoOpt~\cite{wang2023topoopt} co-optimize topology with machine learning workload placement and parallelization strategies, while TopoOpt~\cite{wang2023topoopt} and Swing~\cite{de2024swing} also leverage structures specific to all-reduce collective.
Hybrid data center designs~\cite{farrington2010helios,liu2015hybrid} fall back to packet switching for difficult traffic and often disallow multi-hop forwarding over optical circuits, while more conventional reconfigurable data center designs~\cite{porter2013mordia,saran2024semi,mellette2017rotornet,mellette2020opera,zerwas2023duo} grapple with a different set of challenges \eg handling traffic uncertainty, serving both bulky and latency-sensitive traffic, routing with only local information \etc.
% \vspace{-0.1in}

\begin{figure}[tp]
\centering
\includegraphics[width=0.85\columnwidth]{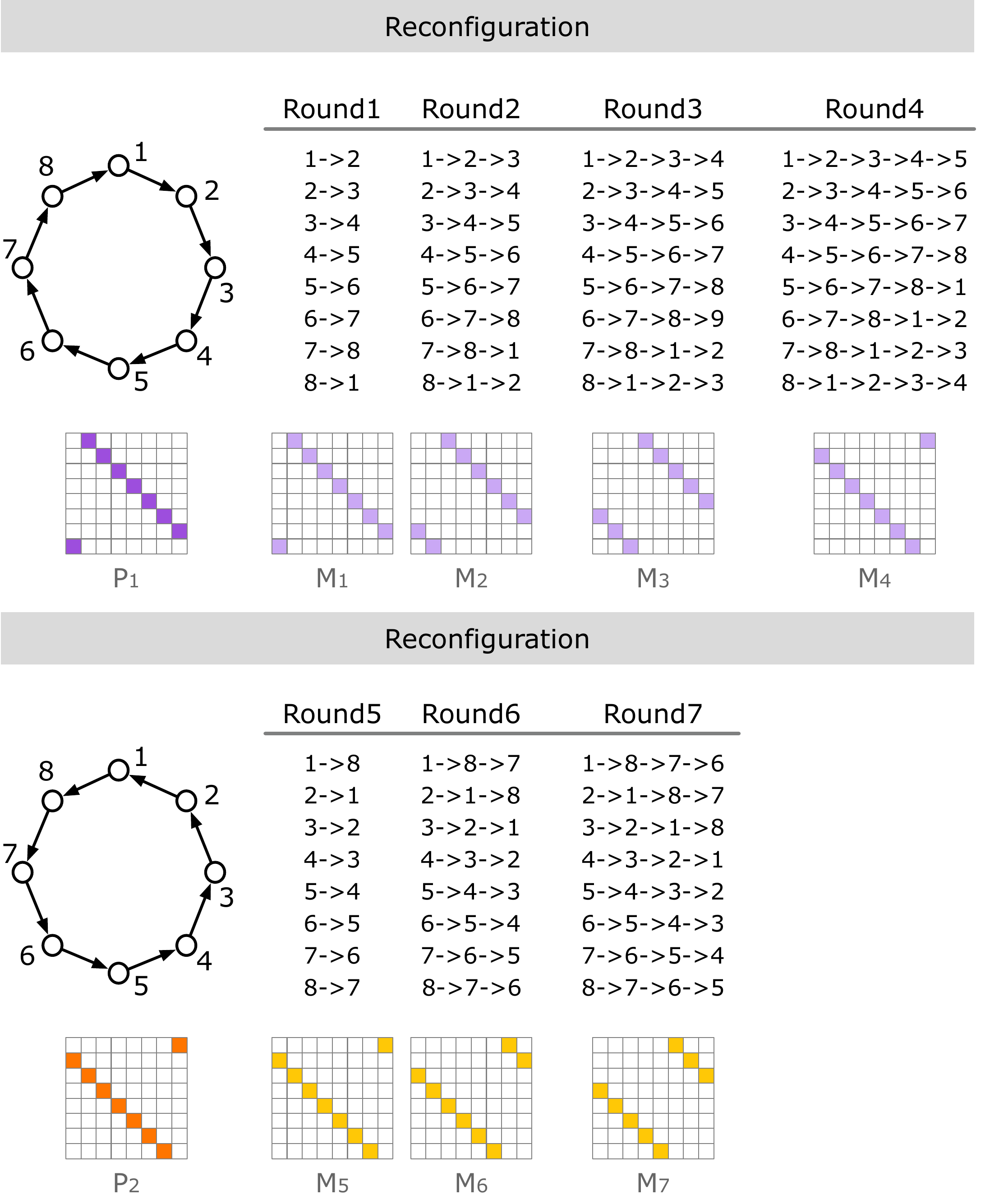}
\vspace{-0.2in}
\caption{An example strategy that reconfigures during the collective, which includes two topologies (reconfigurations) and schedules flows to run approximately evenly across the two topologies in specified rounds. Matrices below the topologies and rounds are illustrations of how they are captured (as adjacency matrices and their powers) in our abstraction.}
\label{fig:motivation_two}
\vspace{-0.2in}
\end{figure}

\begin{figure}[tp]
\centering
\includegraphics[width=0.98\columnwidth]{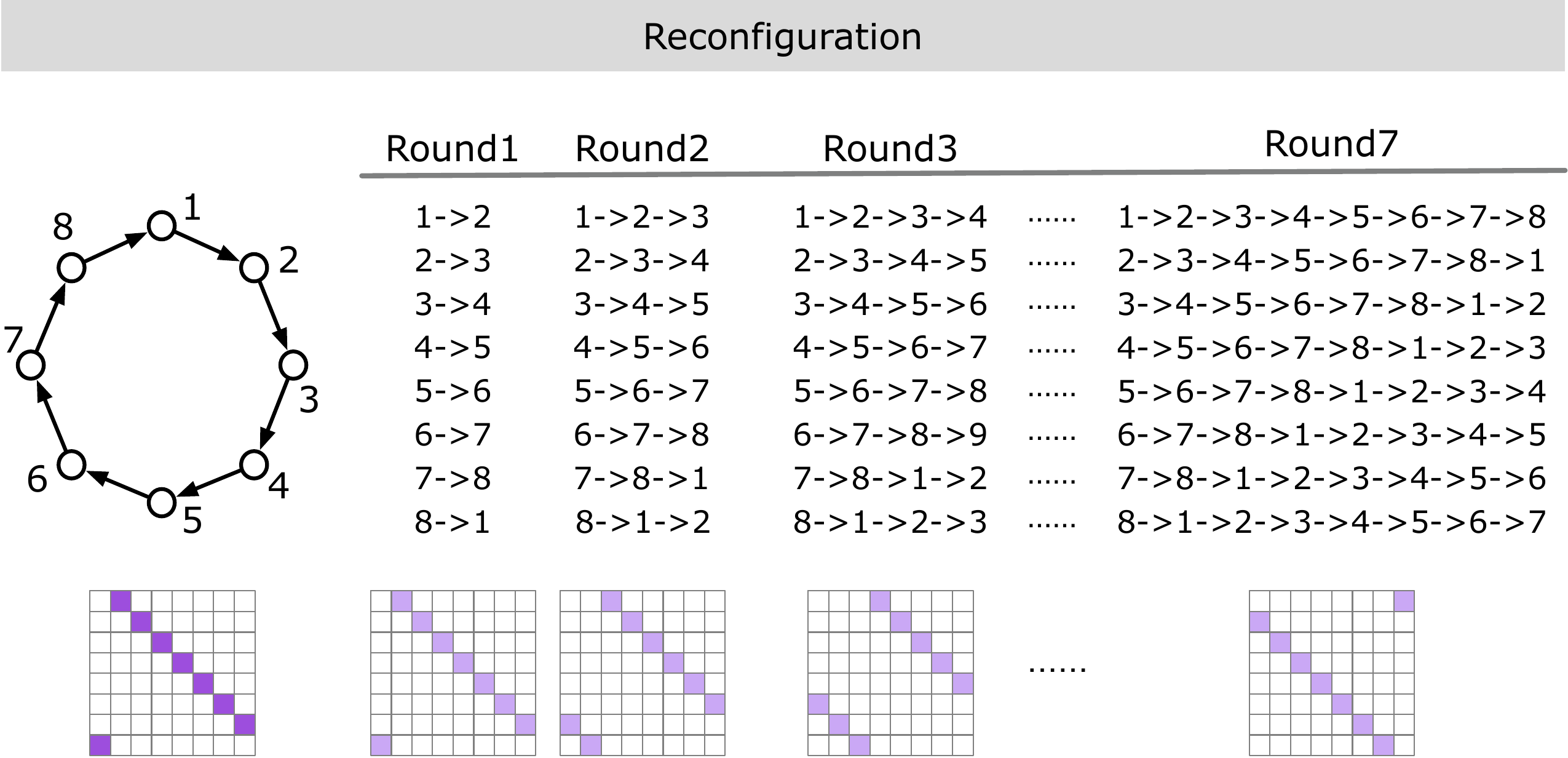}
\caption{An example strategy that reconfigures once at the beginning and sends all flows on this topology.}
\label{fig:motivation_one}
\end{figure}

\begin{figure}[tp]
\centering
\includegraphics[width=0.9\columnwidth]{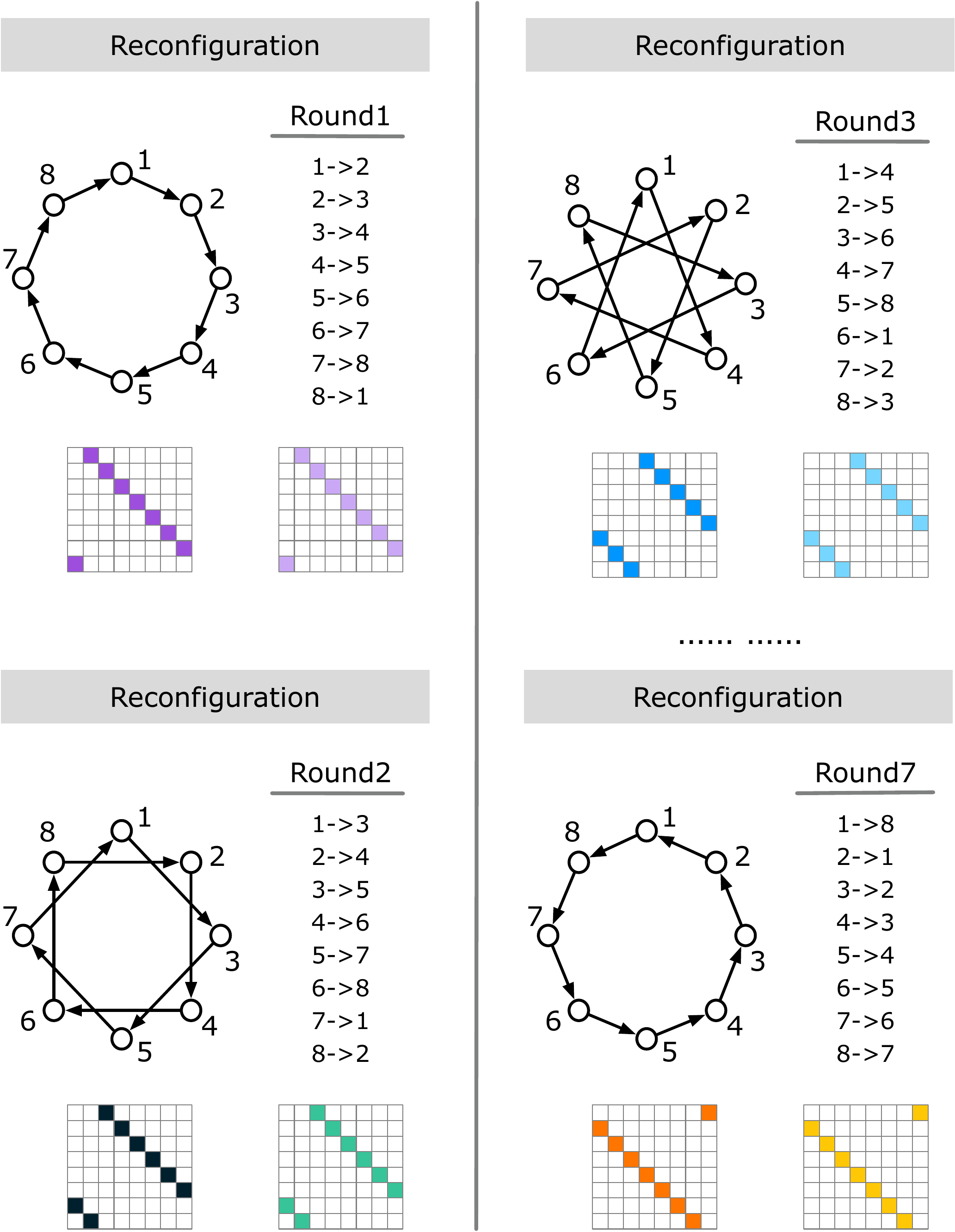}
\caption{An example strategy that reconfigures $n-1$ times and sends all flows on direct circuits. (Not all rounds are illustrated.)}
\label{fig:motivation_many}
\vspace{-0.2in}
\end{figure}

\vspace{-0.15in}
\subsection{Reconfiguring during Collective: Opportunity \& Challenge} \label{sec:motivation_example}
\vspace{-0.05in}
%\maria{chalge title to The case of reconfiguring during collective or something. My issue is that this contains not only the example but why is the problem hard}

%\myitem{Reconfiguring during the collective constitutes a major part of the design space.}

In this subsection, we use an example to demonstrate the potential benefit of reconfiguring during a collective and to show how prior work fails to exploit this opportunity due to simplifying assumptions. Crucially, we argue that these assumptions are not incidental or an oversight, but rather reflect the inherent difficulty of jointly optimizing topology and flow schedule.

%Prior work deals with the  reducing the problem to make it tractable, existing work misses a substantial opportunity: with realistic per-reconfiguration overheads, it could be worthwhile to reconfigure \emph{during} the collective, depending on whether the cost of transmitting flows over longer paths on the current topology exceeds the reconfiguration cost.

\myitem{Motivating Example:} Consider a scale-up network with 8 GPUs, 1 optical switch ($n=8$, $k=1$) where each GPU pair sends one chunk of data to each other, and $R=7T$, where $R$ denotes the per-reconfiguration delay and $T$ denotes the transmission time of sending one chunk over one hop.
%We will consider three approaches:\emph{(i)} allows reconfiguration during collective, \emph{(ii)} does not allow reconfigurations (equivalent to assuming reconfiguration costs is infinite), and allows any number of reconfigurations which are also illustrated in Fig. 

Figure~\ref{fig:motivation_two} illustrates a strategy that allows reconfiguration during collective (ignore the bottom-row matrices for now). This strategy schedules approximately half of the flows across four rounds on the initial topology, then reconfigures a new topology and schedules the rest of the flows across three rounds on that topology. On the first topology, each round costs $T, 2T, 3T, 4T$ respectively, and on the second topology, each round costs $T, 2T, 3T$ respectively, yielding a total cost of $16T + 2R=30T$.%\maria{can we say a bit more on how 30 is found \eg how much each round cost?} 

Now notice how assumptions \eg it is never/always worthwhile to reconfigure would inevitably lead to worse strategies. 
Indeed, under the assumption that it is never worthwhile to reconfigure, the best strategy is routing each flow on its shortest path and grouping flows with the same shortest path length in the same round as shown in Figure~\ref{fig:motivation_one}. That strategy will yield a total cost of $28T + R=35T$. 
On the other end of the spectrum, under the assumption that it is never worthwhile to reconfigure, the best strategy is to configure disjoint GPU pairings so that each GPU establishes a direct circuit to a distinct GPU in each round as shown in Figure~\ref{fig:motivation_many}. That strategy incurs a total cost of $7T + 7R=56T$. 
Overall, these strategies under simplistic assumptions yield at least $16\%$ and $86\%$ longer collective completion times, and this cost gap grows as $n$ increases.

%\myitem{Exhaustive search is prohibitively expensive.}
\myitem{The challenge of jointly optimizing topology and flow schedules.}
While comparing candidate strategies is straightforward, constructing them is not. Constructing good strategies requires deciding whether reconfiguration is beneficial, which topologies to deploy, which flows to run on the topologies, how often to reconfigure, \etc.
Even in our simple example, the optimal number of reconfigurations depends on how $R$ compares to $T$. More generally, the optimal sequence of topologies and flow schedules depends on the specific problem instance. As the problem instance grows larger and more complex, it becomes impossible to 
to infer by inspection a good strategy. Therefore, deciding when and how to reconfigure is inevitably a joint optimization over topologies and flow schedules.
Naively, we could enumerate over all possible (ordered) sequences of topologies and their associated schedules. However, even with a simplified setting that considers only how to partition flows across topologies without considering routing choices, exhaustive search is already infeasible. Specifically, there are at most $n!$ distinct degree-1 topologies (permutations). With $d$ reconfigurations, we have $n!(n!-1)(n!-2)...(n!-d+1) \approx n!^d (d \ll n!)$ possible length-$d$ topology sequences. Moreover, since all-to-all collective has no data dependency, there exists a staggering $d^{n(n-1)}$ possible ways to partition flows across $d$ topologies. The valid range of $d$ is from 1 to $n-1$. Overall, the search space size is 
\[
\sum_{d=1}^{n-1} (n!)^d \cdot d^{n(n-1)},
\]
which for $n=8$ is approximately $4.1\times 10^{79}$---far beyond what we could enumerate.
Related problems jointly optimizing topology and schedule are also known to be NP-hard \cite{foerster2018nphard,shah2023taccl}.

% The \uline{cost of a strategy} is the total collective completion time, which consists of reconfiguration time and data transmission time, written as
% \begin{equation} \label{eqn:totalcost}
% \textsf{TotalCost} = R\cdot  d + T\cdot \sum_{i=1}^d \sum_{j=1}^{s_i} f_{ij}\cdot C_{ij},
% \end{equation}
% where $R$ is the per-reconfiguration delay and $d$ is the number of reconfigurations; $T$ is the time to transmit per-unit \maria{one unit of?} data over one hop, $s_i$ is the number of rounds executed with topology $i$, $f_{ij}$ is the size of the bottleneck flow in round $j$ under topology $i$, and $C_{ij}$ is the corresponding multiplicative slowdown factor relative to transmitting that flow alone over one hop.
% $R$ and $T$ are network-specific, and flow sizes $f$ depend on the specific all-to-all collective; different strategies incur different $d$, and different $\sum_i \sum_j f_{ij}\cdot C_{ij}$ depending on how flows are partitioned and routed.
% Our \emph{objective} is to find a strategy that minimizes \textsf{TotalCost} on a given network of $n$ GPUs, $k$ OCSs, per-reconfiguration delay $R$, per-unit transmission delay $T$ with a given all-to-all collective.
%!TEX ROOT=../paper.tex
\section{Overview}

Instead of enumerating over individual strategies, we introduce an abstraction that models reconfiguring during a collective as a decomposition of the traffic demand into powers of adjacency matrices, enabling reasoning on families of strategies without enumeration (\S\ref{sec:sec3_abstraction}). %captures the entire strategy space, enabling reasoning on regions of strategies (\S\ref{sec:sec3_abstraction}). 
Next, we identify a special region in the abstract solution space characterized by highly symmetrical and expandable topology sequences, and explain our insights on efficiently constructing low-cost strategies (\S\ref{sec:sec3_strategies}).
To build intuition, we will focus on the setting of $k=1$ and equal-sized flows in this section, and discuss generalizations to arbitrary settings at the end of \S\ref{sec:sec3_strategies}.
\vspace{-0.1in}

\subsection{An Abstraction for Reconfiguring during Collective} \label{sec:sec3_abstraction} %\ann{not only during collective? technically the abstraction also includes one-shot reconfiguration}
%Allowing reconfiguration during collective communication explodes the potential strategy space. 
Instead of reasoning over the whole space of all concrete strategies and their cost, we introduce one level of abstraction.
%Therefore, a promising path forward---and the one taken in this paper---is to come up with an abstract way of representing the entire strategy space.
% Therefore, the core challenge lies in coming up with an abstract way of representing the entire strategy space. \maria{Is this a requirement or your contribution? The sentence before says it is a requirement, but the one bellow claims you use it for something-- not all abstractions can be used like that}
%Such an abstraction \maria{not all abstractions do that} 
Our abstraction excludes invalid strategies (\eg strategies that do not respect $k$, do not complete all flows in the collective, \etc), and does not keep track of lower-level routing details. 
Consequently, it exposes the underlying structure of the problem, and enables us to reason about groups (or regions) of strategies that share common topology sequence and flow assignment to topologies.
Critically, our abstraction yields an explicit formulation of the total cost and enables us to derive a lower bound on total cost globally. %Next, we derive a representation for the strategy space, which leads us to an explicit formulation for total cost and a lower bound on total cost.
%\maria{I think last 2 sentences are repetitive}

% \maria{You need to say here why do you go back to the example: Maybe you will first show how your representation allows you to calculate the cost of the strategy of the example without enumerating each round and its routing? Or start with the formula and then apply to the example}
% \myitem{Matrix-based representation.}
We will start with describing the strategy in Figure~\ref{fig:motivation_two} with matrices, show that the strategy corresponds to a decomposition of the all-to-all traffic matrix, and that the cost of the strategy can be derived from the decomposition. Finally, we derive a general matrix representation for the strategy space, which leads to an explicit formulation for and a lower bound on total cost.
Let a matrix $A\in\mathbb{R}_{\ge 0}^{n\times n}$ represent the flows in an all-to-all collective.
A \emph{strategy (without routing details)} can be represented as a sequence $S = \langle (M_1, P_1), (M_2, P_2), ... \rangle$ where $M_i$ is the flow matrix in round $i$ and $P_i$ is the topology used in round $j$. %\ann{but strategy also has a routing component}\maria{you have defined strategy before maybe you could include M and P there} 
Any valid strategy corresponds to a matrix decomposition $A = \sum_i M_i$.
For example, the strategy in Figure \ref{fig:motivation_two} corresponds to 
\vspace{-0.05in}
\[
A = M_1 + M_2 + M_3 + M_4 + M_5 + M_6 + M_7.
\]
\vspace{-0.05in}
Each $(M_i, P_i)$ of round $i$ has an associated transmission cost $\geq h$, where $M_i=P_i^h$ indicating $h$ hop for $M_i$ to be served on $P_i$. 
For instance, in Figure \ref{fig:motivation_two}, each flow in round 1 traverses one distinct hop on topology $P_1$, and thus we write $M_1 = P_1$ and round 1 costs $T$ ($T$ denotes the transmission time of sending one chunk over one hop). Similarly, each flow in round 2 traverses two hops on $P_2$ and we write $M_2 = P_1^2$; round 2 incurs $2T$ transmission cost as hops do not overlap in the time-expanded graph. In general, multi-hop forwarding can be represented by taking powers of the underlying topology, allowing a further simplification of the decomposition to
\[
A = P_1 + P_1^2 + P_1^3 + P_1^4 + P_2 + P_2^2 + P_2^3.
\]
This decomposition maps to a cost calculation that takes the sum of the matrix powers as the transmission cost and the number of distinct $P_i$ as the reconfiguration cost, yielding a total cost of $2 R + 16 T$.
Likewise, the other two strategies in our previous example also correspond to a matrix decomposition and a corresponding total cost, \ie
\[
\text{Fig.\ref{fig:motivation_one}}: A = P_1 + P_1^2 + P_1^3 + P_1^4 + P_1^5 + P_1^6 + P_1^7, 
\quad
\textsf{TotalCost} = R + 28 T.
\]
\[
\text{Fig.\ref{fig:motivation_many}}: A = P_1 + P_2 + P_3 + P_4 + P_5 + P_6 + P_7, 
\quad
\textsf{TotalCost} = 7 R + 7 T.
\]

More generally, we claim that any strategy corresponds to a decomposition of $A$: 
\[
A = \sum_{i=1}^{d} \sum_{j=1}^{s_i} \sum_{l=1}^{g_{ij}} M_{ijl} \odot P_i^{h_{ijl}},
\]
where $A\in\mathbb{R}_{\ge 0}^{n\times n}$ denotes the all-to-all traffic matrix with zero diagonal, $P_i$ denotes $i$-th topology, a non-zero entry in $P_i^h$ denotes a flow taking an $h$-hop route on $P_i$. 
$g_{ij}$ is the number of groups of flows by hop count transmitted in round $j$ on topology $i$, $h_{ijl} \geq 1$ is the hop count of group $l$ in round $j$ on topology $i$, $M_{ijl}\in\mathbb{R}_{\ge 0}^{n\times n}$ is a mask that selects which flows are sent in that group. Flows from different groups in the same round are started at the same time. $\odot$ is the element-wise product. 
Intuitively, each term $M_{ijl} \odot P_i^{h_{ijl}}$ represents a group of flows that are routed using $h_{ijl}$ hops over topology $P_i$ in round $j$.
This matrix-based representation abstracts away lower-level routing details of a strategy, and consequently, given any strategy $S_d$, we can derive its matrix composition and express its total cost as  
\[
\textsf{TotalCost}(S_d) \geq R\cdot d + T\cdot \sum_{i=1}^{d} \sum_{j=1}^{s_i} \max_l h_{ijl}.
\]
When flows in the same round do not share the same link at the same hop, there is no bandwidth \emph{contention} in the time-expanded graph, and we obtain an exact equation for \textsf{TotalCost}.

% \myitem{Lower bound on global optimum.}
In addition, based on this explicit formulation of \textsf{TotalCost}, Theorem~\ref{thm:lowerbound} derives a lower bound on the minimum total cost across any valid strategy given $n$, $k$ and $d$ \ie the global optimum of the strategy space.

% \begin{theorem}[Lower bound on \textsf{TotalCost}$(n,k,d)$] \label{thm:lowerbound}
% Given the number of GPUs $n$, number of optical switches $k$, number of reconfigurations $d$, per-reconfiguration time $R$, and the transmission time of one unit of data over one link $T$, the total cost of any strategy cannot be lower than the following:
% \begin{equation*} \label{eqn:mintotalcost}
% \textsf{TotalCost}(n,k,d) \geq R\cdot d + T\cdot\big(
%  d\cdot\frac{q(q+1)}{2} + u\cdot(q+1)\big),
% \end{equation*}
% where \(q=\left\lfloor\frac{n-1}{d}\right\rfloor, u=(n-1)\bmod d\).
% \end{theorem}

\begin{theorem}[Lower bound on \textsf{TotalCost}(n,k,d)]\label{thm:lowerbound}
Fix integers \(n\ge 2\), \(k = 1\), and \(d\ge 1\). Let \(R>0\) denote the per-reconfiguration time and \(T>0\) denote the transmission time per unit data per hop. 
For any valid strategy that performs exactly \(d\) reconfigurations, the total completion time satisfies
\[
\textsf{TotalCost}(n,k,d)\;\ge\; dR \;+\; T\!\left( d\cdot \frac{q(q+1)}{2} \;+\; u\cdot (q+1)\right),
\]
where \(q=\left\lfloor\frac{n-1}{d}\right\rfloor\) and \(u=(n-1)\bmod d\).
\end{theorem}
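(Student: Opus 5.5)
The plan is a counting argument per topology, followed by an integer convexity (balancing) argument. The reconfiguration term $dR$ is immediate, since the strategy uses exactly $d$ topologies and each costs $R$. The work is to lower-bound the transmission term $T\sum_{i=1}^{d}\sum_{j=1}^{s_i}\max_l h_{ijl}$, using the decomposition $A=\sum_{i,j,l} M_{ijl}\odot P_i^{h_{ijl}}$. We take equal unit flows, so $A=J-I$, consistent with the $k=1$ setting of this section.

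\textbf{Step 1: per-topology bound.} Fix topology $P_i$. Since $k=1$, $P_i$ is a permutation matrix, and so is every power $P_i^h$. Each source therefore reaches exactly one destination at each hop count $h$. Suppose the flows assigned to $P_i$ have, from some source $v$, $m_i$ distinct destinations. Reaching all of them requires $m_i$ distinct hop counts, so the largest is at least $m_i$. More strongly, the multiset of hop counts used must contain at least $1,2,\dots,m_i$ in the following sense. Consider the rounds on $P_i$, each with bottleneck $\max_l h_{ijl}$. Within one round, a source's single outgoing link can carry only one chunk per hop step. I would argue that the total round cost $\sum_j \max_l h_{ijl}$ is at least $1+2+\cdots+m_i=m_i(m_i+1)/2$. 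The natural route is a time-expanded argument: the chunk needing hop count $h$ occupies $h$ link-slots of the permutation path. Alternatively, ordering the distinct hop values $h^{(1)}<\dots<h^{(m_i)}$, the $r$-th one is at least $r$. Each round's cost dominates the hop counts it contains, and contention-free service on a single out-link forces distinct hop counts from the same source into distinct round-time slots. This step is the main obstacle. One must rule out a single round of cost $H$ serving many hop counts simultaneously from the same source. The justification is that with store-and-forward on one out-link, the source injects one chunk per time step, and a chunk injected at step $t$ with $h$ hops finishes at $t+h-1$. The $r$-th distinct hop count injected therefore finishes no earlier than $h^{(r)}\ge r$. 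Summing these finishing constraints across rounds gives $\sum_j \text{cost}_j \ge \sum_{r=1}^{m_i} r$. I would try to make this rigorous by induction on the number of rounds.

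\textbf{Step 2: covering constraint.} Every source has $n-1$ destinations, each served on some topology, so $\sum_{i=1}^d m_i\ge n-1$ for a fixed source $v$. This holds for any one source, and we pick one.

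\textbf{Step 3: optimization.} Minimize $\sum_i m_i(m_i+1)/2$ over nonnegative integers with $\sum_i m_i\ge n-1$. The function $m\mapsto m(m+1)/2$ is increasing with increments $m+1$, so it is integer-convex. An exchange argument shows the minimum is attained when the $m_i$ differ by at most one and sum to exactly $n-1$: $u$ of them equal $q+1$ and $d-u$ equal $q$. Substituting gives
\[
(d-u)\tfrac{q(q+1)}{2}+u\tfrac{(q+1)(q+2)}{2}=d\tfrac{q(q+1)}{2}+u(q+1).
\]
Multiplying by $T$ and adding $dR$ yields the stated bound.
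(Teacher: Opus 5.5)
Your skeleton (per-topology triangular bound, covering by the $n-1$ destinations, integer-convex balancing) matches the paper's sketch, and Steps 2 and 3 are correct. The gap is in the justification you commit to for Step 1.

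A per-source injection argument cannot give $m_i(m_i+1)/2$. If a source may inject one chunk per time step within a round, it can send its hop-$m_i$ chunk first, its hop-$(m_i-1)$ chunk second, and so on. As far as its own out-link is concerned, all $m_i$ chunks then finish by time $m_i$: hops $3,2,1$ injected at $t=1,2,3$ all finish at $t=3$, not $6$. Finishing-time constraints bound a \emph{maximum}, not a sum. So ``summing these finishing constraints across rounds'' only shows that the time $D_i$ spent on $P_i$ (in units of $T$) is at least $m_i$. What really forces the triangular number is that a hop-$h$ chunk also occupies the out-links of $h-1$ \emph{other} nodes, and a single-source view never sees this.

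You can close the gap in either of two ways.

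(a) Use the paper's round semantics. Flows in a round start simultaneously and rounds are contention-free, so with one out-link each source sends at most one flow per round; this is the paper's ``at least $n-1$ rounds in total.'' Then $v$'s $m_i$ flows on $P_i$ lie in $m_i$ distinct rounds, each costing at least the hop count of $v$'s flow in it. Since every $P_i^h$ is a permutation, these hop counts are distinct positive integers, so those round costs sum to at least $1+2+\cdots+m_i$. Your single-source Step 2 then works. Like the paper, this route takes for granted that rounds with contention are never cheaper.

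(b) If you want to allow staggered injection or contention, count link-slots in the time-expanded graph. While $P_i$ is installed, its $n$ links carry $n$ chunks per time step, and a hop-$h$ chunk uses $h$ slots, so $n D_i \ge \sum_v m_i(v)(m_i(v)+1)/2$. In this version you may \emph{not} pick a single source in Step 2: one source can beat its triangular number when the others are lightly loaded on $P_i$. Instead, sum over all $v$, apply your Step 3 bound to each vector $(m_1(v),\dots,m_d(v))$, and divide by $n$.
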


\begin{proof}[Proof Sketch.]
\textsf{TotalCost} is minimized when $\sum_i \sum_j \max_l h_{ijl}$ is minimized and every round is \emph{contention}-free. $\sum_i \sum_j \max_l h_{ijl}$ is minimized (in a hypothetical idealized setting) when both of the following hold.
% (1) we simply use the shortest paths from a single reference source to all its destinations, and assume that when all sources act like that reference source, the resulting schedule achieves the same completion time, 
(1) Each flow is routed along its shortest path, and all sources in the topology are symmetric and see an isomorphic connectivity pattern; hence, the flow completion time is identical for all sources and can be characterized by a single representative source.
(2) The $d$ topologies have maximal reachability, so that these shortest paths that flows traverse on are indeed globally shortest among all possible topologies.
Each source has $k=1$ outgoing link, so delivering to all $n-1$ destinations requires at least $n-1$ rounds in total (across all $d$ reconfigurations).
On each topology, we should leverage paths of the smallest hop counts without repeating (as repeating hop counts lead to repeating destinations). Therefore, to minimize the total hop counts over all rounds, the $n-1$ rounds should be partitioned across the $d$ reconfigurations as evenly as possible.
\end{proof}
% \vspace{-0.15in}

\subsection{Constructing Low-Cost Strategies} \label{sec:sec3_strategies}
Unfortunately, a decomposition corresponding to the lower bound \textsf{TotalCost}$(n,k,d)$ does not always exist. For example, when $n=8, d=3$, there exist no matrices $P_1,P_2,P_3$ such that $A=P_1+P_1^2+P_1^3+P_2+P_2^2+P_3+P_3^2$, meaning no strategy can achieve the minimum possible \textsf{TotalCost} here. This naturally raises the question of how to construct strategies that operate at or close to the global optimum, given network parameters $n,k$ and the number of reconfigurations $d$.
Fortunately, our abstraction still guides us towards regions of the strategy space that yield low-cost strategies and can be identified with low computation overhead. 
We will again refer to Figure \ref{fig:motivation_two} for intuition. Observe that with topologies $P_1$ and $P_2$, if each flow picks the topology that provides shorter shortest paths, and flows with the same shortest path lengths are grouped into the same round, the resulting schedule is globally optimal, as all flows in this schedule traverse the shortest paths possible and all links are fully utilized.
This observation leads to our key insight that \emph{an appropriate sequence of topologies} allows an efficient algorithm to identify a schedule that achieves the minimum possible total cost on this topology sequence. 
Critically, this regional optimum is also close to the global optimum. 
Consequently, we reduce the original two-dimensional search over both topologies and schedules to the simpler task of identifying good topology sequences. 
Next, we will elaborate on our design principles on how to construct such sequences---ultimately allowing us to avoid explicit search over topologies altogether---and how to derive the corresponding schedule.

\myitem{Design principle \#1: Intra- \& inter-topology symmetry.}
When nodes in a topology have isomorphic connectivity pattern (intra-topology symmetry), their available paths are similar. As a result, available path lengths are more uniform across flows originating from different sources. This prevents a small number of outstanding flows that require longer paths from dominating the schedule's completion time by prolonging rounds in which most flows complete earlier.
Moreover, different topologies in the sequence may provide shorter paths for different source-destination pairs. When these topologies are symmetrical to each other (inter-topology symmetry), these shorter path opportunities eliminate entire rounds rather than benefiting only a subset of flows within a round, leaving the round's completion time unchanged. 
Together, these intra- and inter-topology symmetries reduce the number of rounds in a schedule by ensuring that rounds are densely populated \ie topology capacity is efficiently utilized with minimal waste, thereby reducing total cost.

\myitem{Design principle \#2: High expansion.}
Each topology should exhibit strong expansion \ie each node should reach as many distinct destinations as possible within a given hop-count limit, thus supporting short paths for any source-destination pair. (This is trivial for the degree-1 case since it can only be a ring, but crucial for the general degree-$k$ graphs.)
Moreover, the topology sequence should exhibit strong expansion \ie different topologies in the sequence should expose largely non-overlapping $h$-hop neighborhoods for each given source, so that reconfigurations meaningfully expand the destinations reachable within $h$ hops, rather than repeatedly covering the same nodes. Consequently, the shortest paths available across topologies in the sequence are indeed short.
High expansion reduces the path lengths required in each round and thus reduces total cost.

\myitem{Design principle \#3: Contention-free schedule.}
A good schedule should increase link utilization across the network while avoiding contention (in the time-expanded graph). To achieve this, we develop an efficient algorithm that assigns flows to their shortest paths, groups flows by path length, and schedules contending paths to different rounds. 
For highly expandable graphs (that we use in the degree-$k$ case) that inevitably lack perfect intra-topology symmetry, our algorithm opportunistically combines rounds that do not contend with each other, reducing the number of rounds further. 
Intra- and inter-topology symmetries ensure that this scheduling does not incur an excessive number of rounds, while high expansion keeps the hop counts within each round small.

\myitem{Generalization to degree-$k$ graphs \& varied-sized flows.}
% For degree-$k$ graphs, symmetry and expansion could be hard to co-exist in a topology, and thus we make use of two families of topologies---union of shifted rings (stronger symmetry) and generalized Kautz graphs (stronger expansion)---depending on the amount of data to transmit on a topology. \maria{weird that you say it is hard to co-exist do you mean in a random topology? Maybe you want to say this type of graph achives a good trade-off on the properties we aim to preserve \ie symetry and expansion?}
For degree-$k$ graphs, we leverage two types of topologies: \emph{(i)} circulant graphs, which exhibit strong symmetry but lack high expansion within each graph due to their regular structure, and \emph{(ii)} generalized Kautz graphs, which exhibit high expansion but lack perfect symmetry. We construct the topology sequence based on circulant when traffic is lighter and rounds are reasonably well-packed, and switch to construction based on generalized Kautz when traffic is heavier hence shorter paths are more crucial.
% As a result,  more care is taken when constructing the topology sequence, and when deriving the schedule.
For all-to-all with varied-sized flows, we optimize the schedule by grouping larger flows to the same round, essentially relabeling the nodes in a topology.

% \myitem{Accounting for the number of reconfigurations $d$.}
% Ultimately, which strategy is best depends on the ratio between per-reconfiguration cost $R$ and data transmission cost $T\cdot f$, which varies across networks and collectives and may not be known until runtime.
% Therefore, we divide our approach into two phases. In the first \emph{offline} precomputation phase, we minimize the total cost for any given number of reconfigurations $d$, producing a family of candidate strategies parameterized by $d$. In the second \emph{online} selection phase, we choose the best $d$ and the corresponding strategy when network-specific $R, T$ and collective-specific $f$ are known. \ann{check whether holds for arbitrary}
\vspace{-0.1in}
\section{Design}
\vspace{-0.05in}

This section starts with a design overview (\S\ref{sec:design_overview}) and provides details on how to select an appropriate type of topology (\S\ref{sec:design_topochoice}) and how to construct the topology sequence (\S\ref{sec:design_toposeq}).
We will first present the degree-1 case to build intuition, and then extend the discussion to the degree-$k$case. 
While we describe them separately for clarity, the two settings follow the same underlying principles.
Our schedule generation is tightly coupled with the topology generation, and its details are embedded in \S\ref{sec:design_topochoice} and \S\ref{sec:design_toposeq}. 
We close the section by discussing how to optimize for varied-sized flows (\S\ref{sec:design_opt}).
% Degree-1 is a simplifed version of degree-$k$. The choice of topologies at degree-1 and at degree-larger-than-1 slightly differs, with degree-larger-than-1 being the most challenging. 
% So we discuss them both. 
% In each subsection, we discuss degree-1 first for intuition and then generalize to degree-$k$. We first tackle uniform all-to-all collective and extend our approach to arbitrary all-to-all 
% as an optimization

\vspace{-0.1in}

\subsection{Design Overview} \label{sec:design_overview}

The total cost of a strategy is parameterized by the number of reconfigurations $d$, as shown in Theorem~\ref{thm:lowerbound}. Therefore, we generate a strategy for each possible $d$ values, compute its total cost, and pick the best $d$ (and the corresponding strategy) that yields the lowest cost.
We generate the topology sequence in two steps. First, we select a base topology that exhibits strong intra-topology symmetry and expansion. Second, we generate a sequence of $d-1$ topologies by "shifting" or "contracting" based on the base topology, in a way that ensures strong inter-topology symmetry and expansion.
We use \emph{contention} to refer to link sharing in the time-expanded graph. Both base and shifted topologies admit bandwidth-efficient, contention-free schedules for all-to-all.
Our schedule generation is more efficient than state-of-the-art schedulers that either use constraint-based solvers (\eg TACCL~\cite{shah2023taccl}, TECCL~\cite{liu2024teccl}) or require solving an NP-hard minimum BvN decomposition problem (FAST~\cite{lei2025fast}).
\vspace{-0.1in}

\begin{figure}[t]
     \centering
     \begin{subfigure}[b]{0.35\columnwidth}
         \centering
         \includegraphics[width=\textwidth]{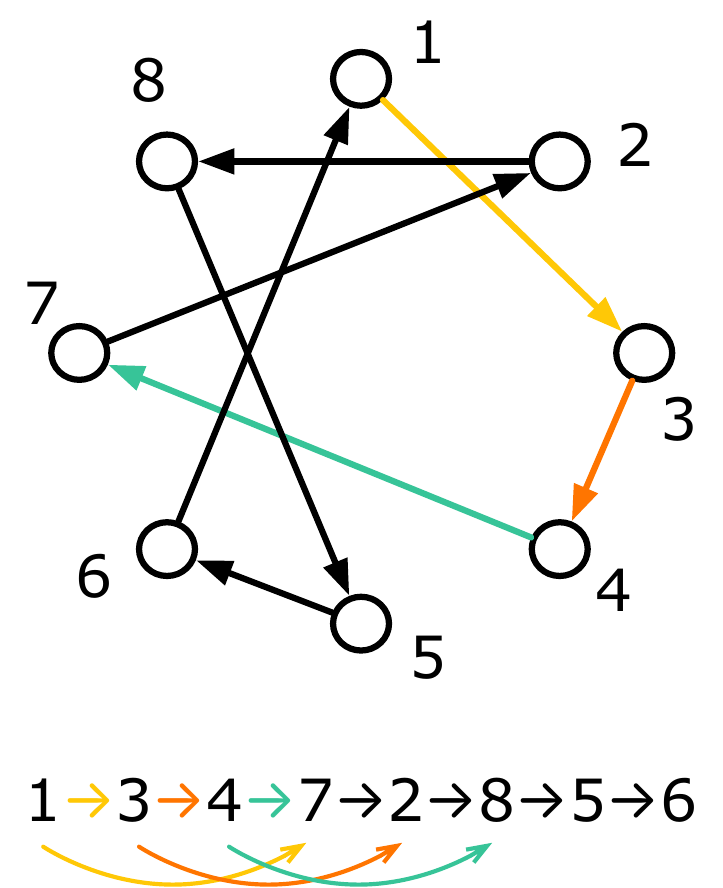}
         \caption{Base topology.}
         \label{fig:design_deg1topobase}
     \end{subfigure}
     % \hfill
     \begin{subfigure}[b]{0.35\columnwidth}
         \centering
         \includegraphics[width=\textwidth]{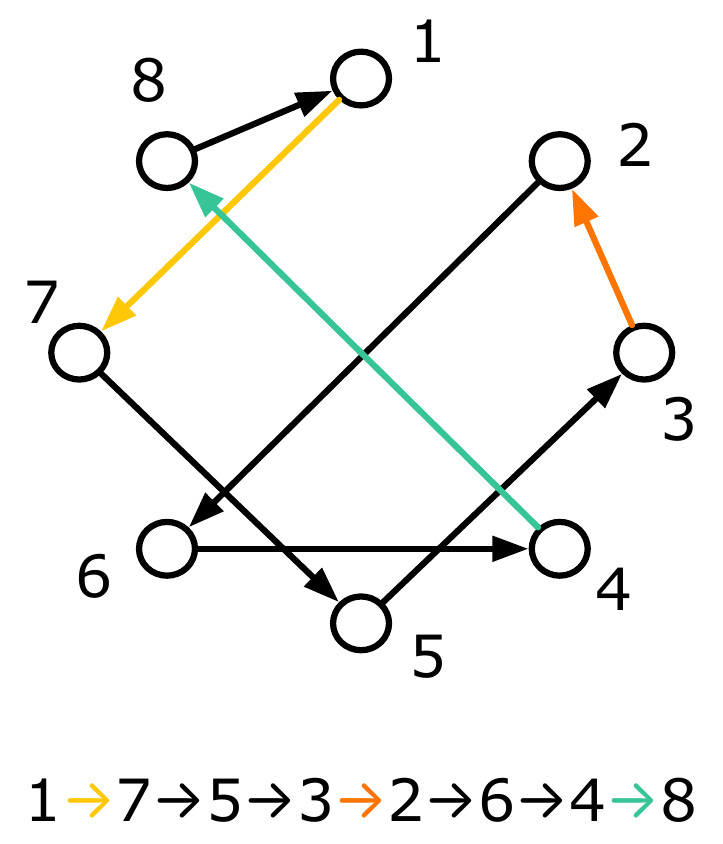}
         \caption{Shifted topology.}
         \label{fig:design_deg1toposhift}
     \end{subfigure}
     \vspace{-0.1in}
     \caption{An example of base and shifted topologies for degree-1. }
     % \vspace{-0.15in}
\label{fig:design_deg1topo}
\vspace{-0.3in}
\end{figure}

\begin{figure}[tp]
  \centering

  \begin{subfigure}{\linewidth}
    \centering
    \includegraphics[width=0.98\linewidth]{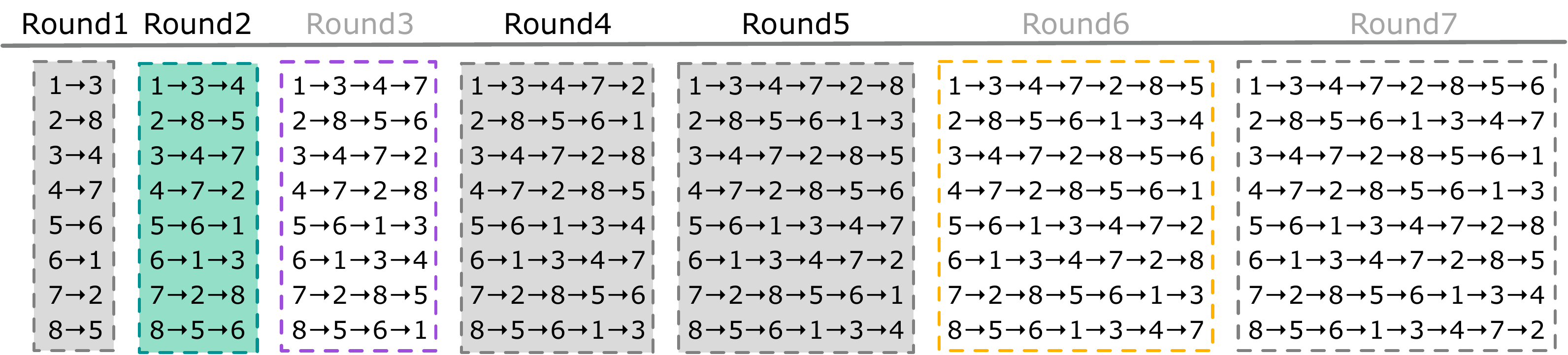}
    \caption{Schedule for the base topology.}
    \label{fig:design_deg1schbase}
  \end{subfigure}

  % \vspace{} % adjust vertical gap

  \begin{subfigure}{\linewidth}
    \centering
    \includegraphics[width=0.98\linewidth]{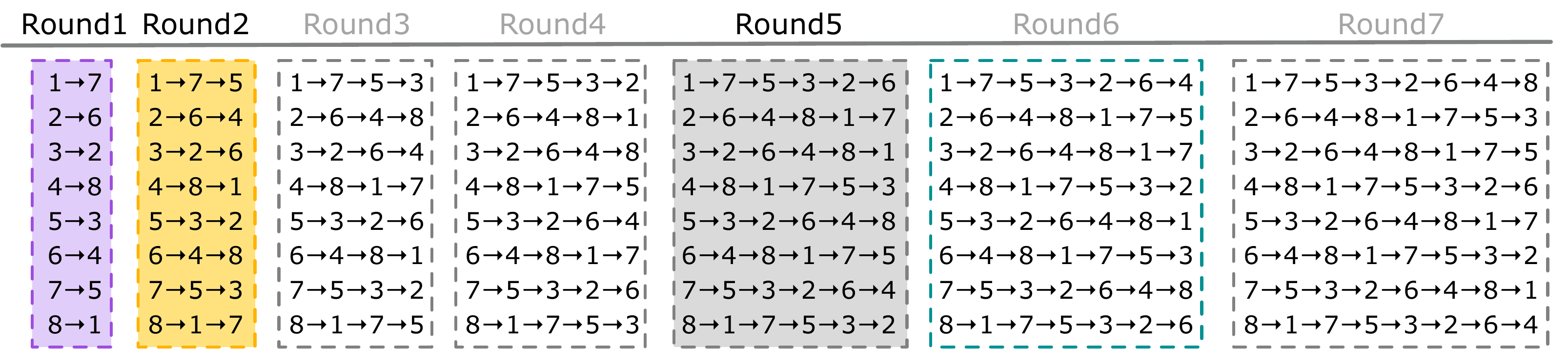}
    \caption{Schedule for the shifted topology.}
    \label{fig:design_deg1schshift}
  \end{subfigure}

    \vspace{-0.1in}
  \caption{Contention-free, bandwidth-efficient schedules that correspond to the two topologies in Figure~\ref{fig:design_deg1topo}.}
  \label{fig:design_deg1sch}
  \vspace{-0.2in}
\end{figure}

\subsection{Choice of Base Topology \& Scheduling on a Topology} \label{sec:design_topochoice}

\myitem{Degree-1.} 
Degree-1 graph is essentially a permutation over $n$, hence achieving intra-topology symmetry is trivial.
To ensure high expansion, we pick a permutation that forms a single $n$-cycle, % instead of say two $\frac{n}{2}$-cycles.
which is denoted by $\langle \sigma_1, \sigma_2, \sigma_3, ... \sigma_n\rangle$ i.e., node $\sigma_1$ is connected to node $\sigma_2$, node $\sigma_2$ is connected to node $\sigma_3$, and so on. %Note that if $\sigma_1=\sigma_2=\sigma_3 = ... = \sigma_n$, then this is just a uniformly shifted ring. But we don't necessarily need that. 
Figure~\ref{fig:design_deg1topobase} shows an example of the base topology with its corresponding $\sigma$-sequence (ignore the colors for now).

All flows can only traverse one path in this case, and we generate the schedule by grouping all flows with the same path length into the same round. Due to intra-topology symmetry \ie nodes have isomorphic connectivity patterns, the paths from a source to different destinations are equivalent to those from other nodes. As a result, these rounds are fully packed and contention-free. Figure~\ref{fig:design_deg1schbase} shows the schedule for the example base topology (ignore the colors for now). Critically, this schedule achieves the minimum possible transmission time, \ie is globally optimal.

\begin{figure}[t]
     \centering
     \begin{subfigure}[b]{0.35\columnwidth}
         \centering
         \includegraphics[width=\textwidth]{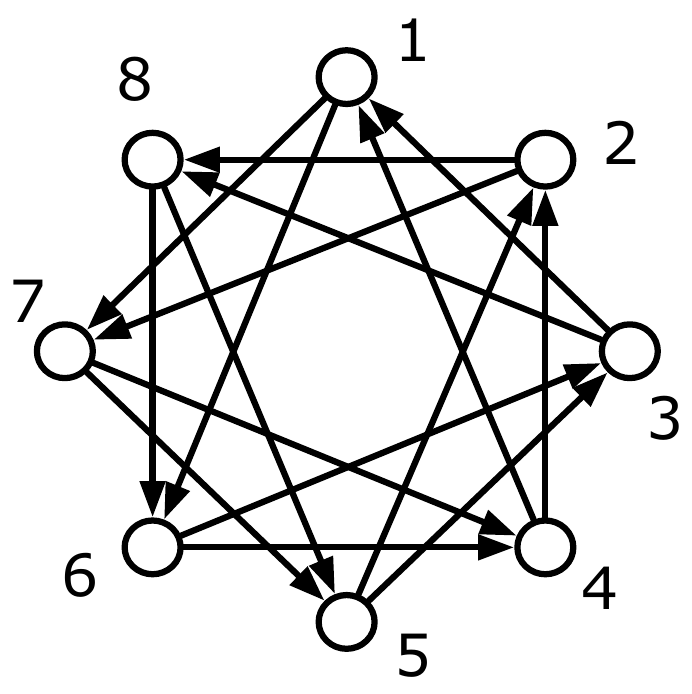}
         \caption{Base Circulant topology.}
         \label{fig:design_degkcirc_topobase}
     \end{subfigure}
     % \hfill
     \begin{subfigure}[b]{0.35\columnwidth}
         \centering
         \includegraphics[width=\textwidth]{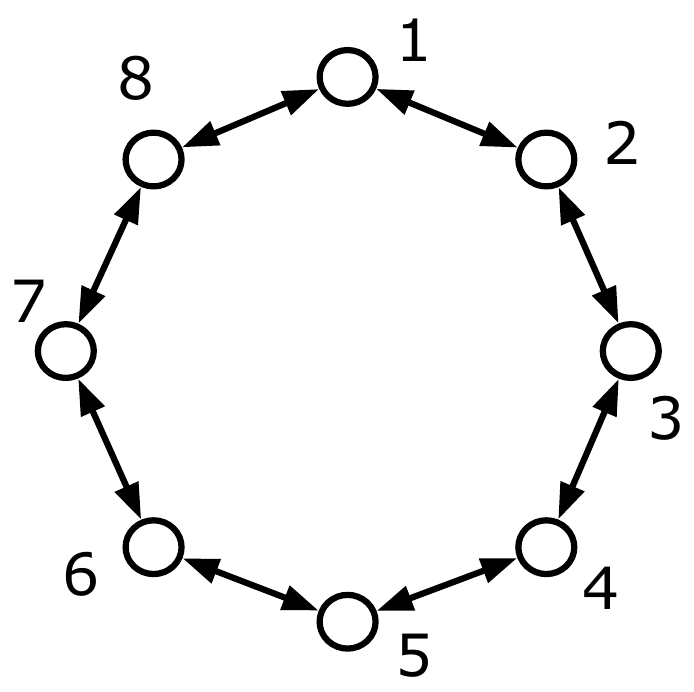}
         \caption{Shifted Circulant topology.}
         \label{fig:design_degkcirc_toposhift}
     \end{subfigure}

     % \vspace{0.6em} % adjust vertical gap

  \begin{subfigure}{\linewidth}
    \centering
    \includegraphics[width=0.5\linewidth]{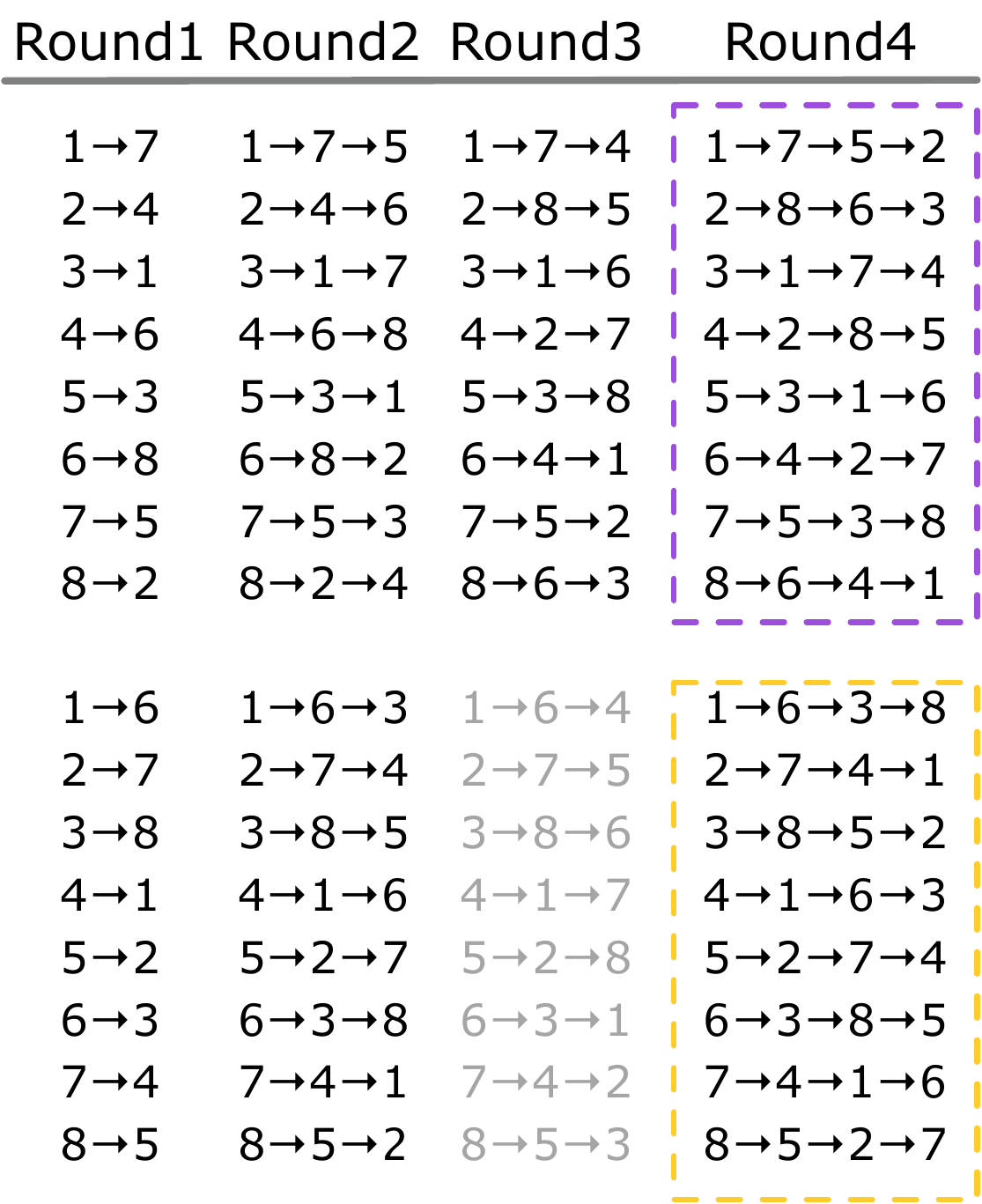}
    \caption{Schedule for the base Circulant topology.}
    \label{fig:design_degkcirc_schbase}
  \end{subfigure}
  
     \vspace{-0.1in}
     \caption{An example of base and shifted Circulant topologies, and the schedule for the base topology, for degree-$k=2$.}
     % \vspace{-0.15in}
\label{fig:design_degkcirc}
\vspace{-0.2in}
\end{figure}

\begin{figure}[t]
     \centering
     \begin{subfigure}[b]{0.35\columnwidth}
         \centering
         \includegraphics[width=\textwidth]{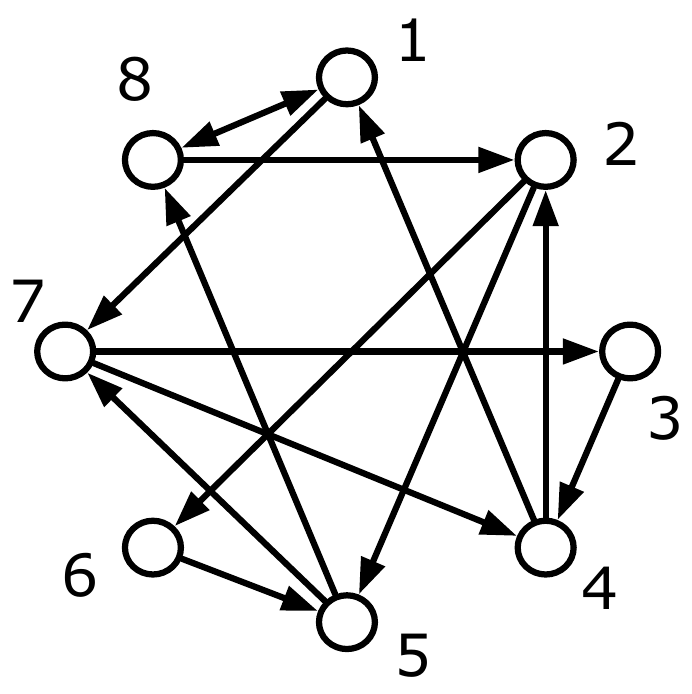}
         \caption{Base GenKautz topology.}
         \label{fig:design_degkgkau_topobase}
     \end{subfigure}
     % \hfill
     \begin{subfigure}[b]{0.35\columnwidth}
         \centering
         \includegraphics[width=\textwidth]{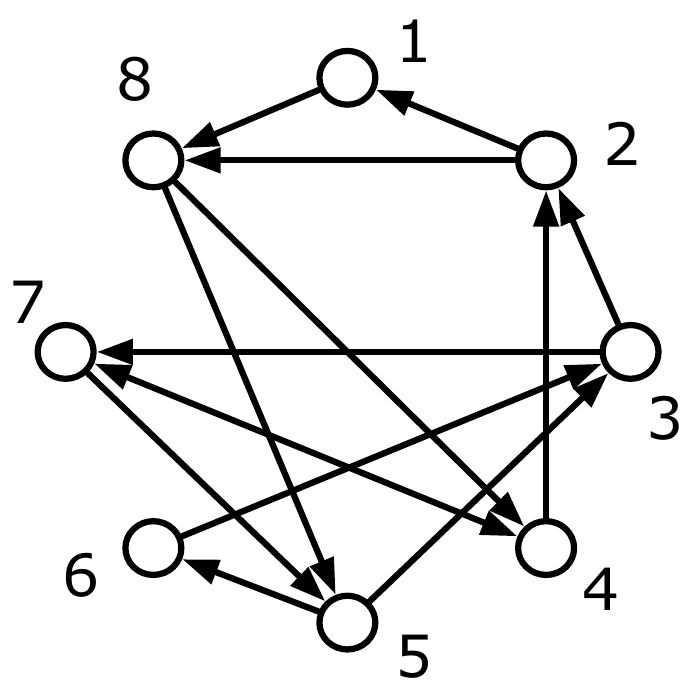}
         \caption{Shifted GenKautz topology.}
         \label{fig:design_degkgkau_toposhift}
     \end{subfigure}

     % \vspace{0.6em} % adjust vertical gap

  \begin{subfigure}{\linewidth}
    \centering
    \includegraphics[width=0.98\linewidth]{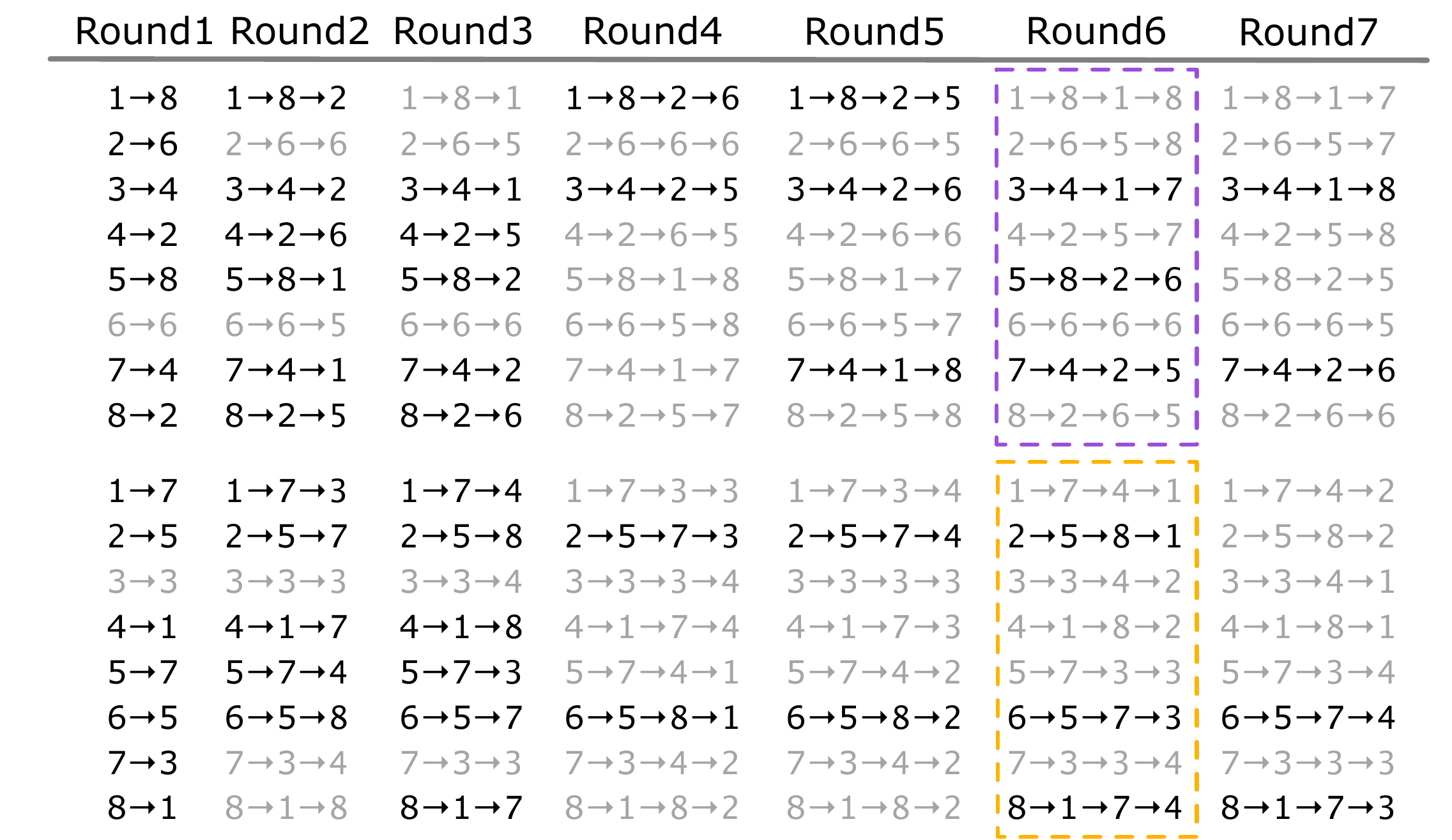}
    \caption{Schedule for the base GenKautz topology.}
    \label{fig:design_degkgkau_schbase}
  \end{subfigure}
  
     \vspace{-0.1in}
     \caption{An example of base and shifted Generalized Kautz topologies, and the schedule for the base topology, for degree-$k=2$.}
     % \vspace{-0.15in}
\label{fig:}
\vspace{-0.2in}
\end{figure}

\myitem{Degree-k.} 
We use two types of graphs for degree-$k$: Circulant graphs and Generalized Kautz graphs.
Figure~\ref{fig:design_degkcirc_topobase} shows an example of the base Circulant topology constructed for $k=2$. Circulant graphs exhibit perfect intra-node symmetry, and the paths from a source to different destinations are equivalent to those from other nodes too. However, this regular structure inevitably leads to limited expansion per topology.
When generating the schedule, we consider node1 as a reference node, and list the destinations that node1 can reach when taking a 1-hop path, 2-hop path, \etc, until all destinations have been covered. When multiple paths lead to the same destination, we pick the one of the shortest length, and break the tie randomly. After obtaining $n-1$ paths to the $n-1$ destinations, we sort these paths by their path lengths and start packing them into rounds. Each round has a capacity of $k$. We greedily pack the $n-1$ paths into as few rounds as possible, creating a new round only when previous rounds are either full or contending. Since a Circulant graph is symmetrical, we populate the schedule for other nodes by following the paths chosen by the reference node. Figure~\ref{fig:design_degkcirc_schbase} shows the schedule for the base Circulant topology (ignore the colored boxes). One group of paths is colored gray as they are not selected by the schedule. Notice how symmetrical graphs like Circulant ensure that each round is as packed as possible.

Figure~\ref{fig:design_degkgkau_topobase} shows an example of the base Generalized Kautz topology constructed for $k=2$. Generalized Kautz (GenKautz) graph \footnote{GenKautz is isomorphic to de Bruijn graphs when $n = k^D$, where $D$ is the diameter, but allows more flexible pairs of $n$ and $k$.} is an expander, thus ensuring high expansion per topology, but inevitably exhibits weaker intra-node symmetry. 
Figure~\ref{fig:design_degkgkau_schbase} shows the schedule for the base GenKautz topology (ignore the colored boxes). When generating the schedule, we instead list out all $h$-hop paths from all sources for $h=1,2,...$ (in an ascending order). For hop-1 paths, there can be no contention and all flows can be packed into the same round. For each subsequent $h>1$, we allocate $k^{h-1}$ rounds. Similar to the Circulant case, we pack paths greedily into rounds, considering a next round only when all previous rounds are full or contend with the current path. We then enumerate over each round, select the shortest paths to each destination for each source, break ties randomly, and remove all other paths (\ie the grayed-out paths in Figure~\ref{fig:design_degkgkau_schbase}). At the end of this process, we iterate over the schedule and opportunistically combine rounds that do not contend with each other.

Figures~\ref{fig:design_degkcirc_schbase} and~\ref{fig:design_degkgkau_schbase} shed light on the intuition of how to select between Circulant and GenKautz. Circulant admits rounds that are as fully packed as possible, thus yielding a schedule with fewer rounds.%, due to its strong intra-topology symmetry. However, the weaker expansion of Circulant means that when $n$ gets larger, the paths will also get larger, leading to larger number of hops per round. 
In contrast, GenKautz minimizes longest shortest paths, thanks to its high expansion, but leads to schedules with more "holes" as shown in Figure~\ref{fig:design_degkgkau_schbase}. Specifically, these holes are wasted network capacity and increase the number of rounds required in the schedule. GenKautz is more favorable when $n$ is large and the impact of path length is more evident, or when $d$ (the number of reconfigurations) is smaller and traffic is more concentrated in a few topologies, hence leaving fewer holes.
In our approach, we compute two strategies, each corresponding to either Circulant or GenKautz, and pick the one that yields lower cost. Our empirical results align with the intuition above that when $n$ is larger and $d$ is smaller, GenKautz is favored; otherwise, Circulant is favored.
\vspace{-0.1in}

\begin{figure}[tp]
  \centering

  \begin{subfigure}{\linewidth}
    \centering
    \includegraphics[width=0.35\linewidth]{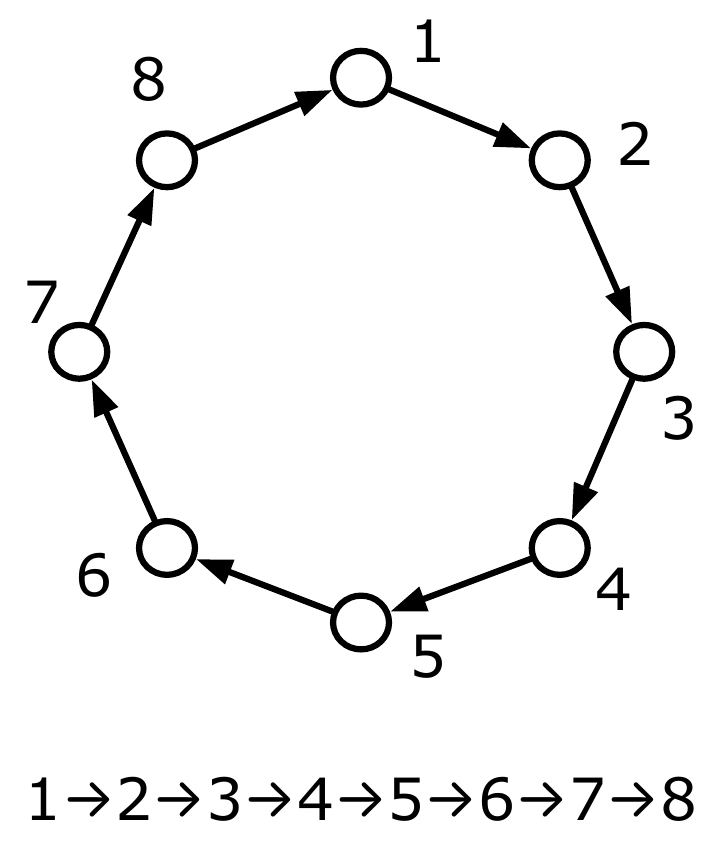}
    \vspace{-0.05in}
    \caption{Non-shifted topology.}
    \label{fig:design_countertopo}
  \end{subfigure}

  % \vspace{0.6em} % adjust vertical gap

  \begin{subfigure}{\linewidth}
    \centering
    \includegraphics[width=0.98\linewidth]{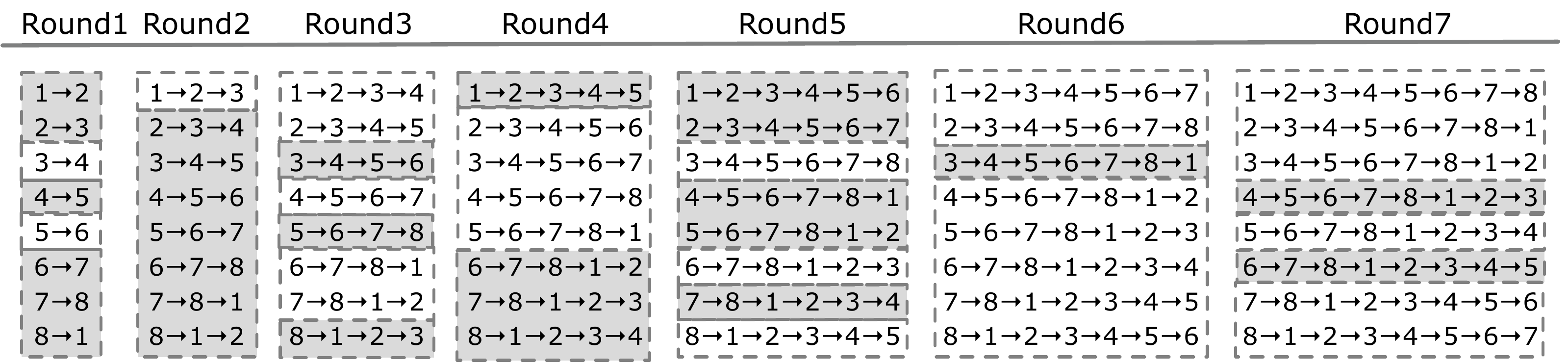}
    \vspace{-0.05in}
    \caption{Schedule for the non-shifted topology.}
    \label{fig:design_countersch}
  \end{subfigure}
    \vspace{-0.25in}
  \caption{A counter-example where a non-shifted topology leads to poor inter-topology symmetry, and thus high cost, for $k=1$.}
  \label{fig:design_counter}
  \vspace{-0.25in}
\end{figure}

\subsection{Topology Sequence Generation \& Scheduling across Sequence} \label{sec:design_toposeq}

\myitem{Degree-1.} 
We generate a sequence of $d-1$ topologies by "shifting" on the base topology.
Following the example in Figure~\ref{fig:design_deg1topo}, Figure~\ref{fig:design_deg1toposhift} shows a 3-shifted topology based on the base topology. Specifically, with the base $\sigma$-sequence, $Q^b = \langle \sigma^b_1, \sigma^b_2, \sigma^b_3, ... \sigma^b_n\rangle$, node $\sigma^b_i$ is connected to node $\sigma^b_{i+1}$. Now we construct a topology where node $\sigma^b_i$ is connected to node $\sigma^b_{i+3}$ instead, thus a 3-shifted topology. In Figure~\ref{fig:design_deg1topo}, we colorcode three of the connections and demonstrate how the 3-shifts happen.
When both topologies are available, our schedule prefers the shortest path. Intra-topology symmetry essentially allows us to schedule just for a reference node and have all other nodes follow the reference node.

First, this generation ensures high expansion across the sequence, as each node will be connected to different nodes for any shift $>1$.
More critically, this generation also ensures inter-topology symmetry. Figure~\ref{fig:design_deg1sch} illustrates the two schedules corresponding to the base and 3-shifted topology, which we will refer to for intuition behind the inter-topology symmetry. Rounds are boxed and colored, and a round with a filled box means it is kept in the final schedule, while a round with a transparent box means it is removed. Take the two purple boxes as an example, which contain the exact set of flows but with different paths. Our schedule prefers shorter paths, thus the entire Round3 in the base schedule is removed, while the entire Round1 in the shifted schedule is kept. After accounting for all rounds, we are left with a schedule, across the two topologies, where Round1,2,4,5 from the base schedule and Round1,2,5 from the shifted schedule are kept, leading to a total transmission cost of $19T$ and thus a total cost of $19T+2R$. The fact that we can remove or retain rounds in full points to the inter-topology symmetry of this generated topology sequence. In addition, in our algorithm, we prefer shifts that still maintain the $n$-cycle rather than breaking into smaller-length cycles, but empirically, we find that the order of shifting is not critical as long as we prioritize using a $n-1$-shift.

Now, if we consider a counter-example illustrated in Figure~\ref{fig:design_counter}, where a second topology is constructed in Figure~\ref{fig:design_countertopo}, in addition to the base topology in Figure~\ref{fig:design_deg1topobase}, but not according to the algorithm we describe above. Figure~\ref{fig:design_countersch} shows the resulting schedule partially \ie which flows from the second schedule are kept in the final schedule (which flows from the first schedule are kept is omitted), where the flows in gray boxes are retained and the flows in white boxes are removed. It is obvious from the figure that even though we have reconfigured an additional topology and some flows can traverse a shorter path, the lack of inter-topology symmetry means that none of the rounds can be removed entirely. We could also opportunistically combine some of the rounds to reduce the total cost, but doing so will not generate a schedule with a cost lower than our shifting and will increase the computational overhead.

\myitem{Degree-k.} 
We generate a sequence of $d-1$ topologies by "contracting" based on the base topology and schedule, for both Circulant and GenKautz. The intuition here is that when a new topology becomes available, we should \emph{at least} be able to reduce by one round. We do so by directly connecting each flow's source and destination in one selected round in the previous schedule. Doing so generates a valid topology that respects $k$. We are not guaranteed that each node will reach $k$ new destinations, thus having weaker inter-topology expansion, but we are guaranteed that the cost will reduce at least by the number of hops in the contracted round. The colored boxes in Figures~\ref{fig:design_degkcirc_schbase} and~\ref{fig:design_degkgkau_schbase} indicate the rounds contracted. In our algorithm, we pick the round to contract by the number of hops in the previous schedule.
In addition, notice that we can also name the "contraction" based on the base Circulant topology as "shifting", \ie "shifting" is a special case of "contracting" that happens when the base topology has perfect intra-topology symmetry.
\vspace{-0.1in}

\subsection{Optimization for Varied-Sized Flows}  \label{sec:design_opt}

When flows are of different sizes, we implement an optimization that relabels the nodes such that node pairs that send larger flows are grouped into the same round as much as possible. 
In addition, note that since our approach leverages the symmetric nature of all-to-all, our performance deteriorates when the input collective is \emph{very skewed}.
\vspace{-0.05in}
\section{Evaluation}

Our evaluation focuses on demonstrating the following:

(1) We reduce the total collective completion time compared to state-of-the-art topology~\cite{liao2025mixnet}, scheduler~\cite{lei2025fast} and optimization frameworks~\cite{zhao2025direct-connect,wang2023topoopt} across a wide range of network parameters, data sizes and workload types (\S\ref{sec:eval_main}).

(2) Our generated topology and schedule sequences are within a small gap of the lower bound (\S\ref{sec:eval_optimality}).

(3) Per-reconfiguration cost affects our reconfiguration decisions (\S\ref{sec:eval_reconfig}).

% (4) Our schedule generation is fast.

\vspace{-0.1in}

\subsection{Methodology}
% \vspace{-0.1in}

\myitem{Setup.}
We evaluate using the htsim packet-level simulator~\cite{htsim}. Reconfigurations are implemented as changes in the input topologies and available routes, and reconfigurations only happen when all current flows have finished. Forwarding is implemented as store-and-forward, and flows with multiple chunks are simulated by indexing the chunks and requiring all nodes to prioritize sending chunks of smaller indices.
We set the link bandwidth to 800Gbps and link latency to 500ns, following prior work~\cite{adaptivephotonicshotnets25vamsi,khan2022linklat500ns}. We set the chunk size to 4MB.
We evaluate four sizes of the scale-up network by varying $n$, the number of GPUs per server, over $\{8,16,32,64\}$. We also vary $k$, the number of optical switches per server or the number of outgoing links per GPU, over $\{1,2\}$. 
We evaluate three types of workloads: (1) Uniform: all GPU pairs send flows of the same size. (2) Random: flow sizes among GPU pairs are generated randomly from a uniform distribution. (3) Zipfian: flow sizes among GPU pairs are generated from a Zipfian distribution with $factor=0.4$. The latter workload is more skewed, with many mice flows and a couple of elephant flows. Prior work~\cite{lei2025fast} reports that their profiling of MoE pretraining follows Zipfian distribution.
For uniform-sized workload, we vary the size of flows for each pair of GPUs over 8MB,16MB,32MB,64MB. For random-sized and Zipfian-sized workloads, we vary the average flow size over 8MB,16MB,32MB,64MB.

\myitem{Baselines.}
We compare against three classes of baselines:
(1) Scale-up topology design: MixNet~\cite{liao2025mixnet} greedily constructs a topology based on the given all-to-all traffic matrix. We do not show MixNet for the uniform workload, since all flows are of the same size and it is unclear how MixNet breaks ties.
(2) Scheduler: FAST~\cite{lei2025fast} is the state-of-the-art scheduler for all-to-all GPU communication, which relies on BvN decomposition and generates schedules faster than solver-based schedulers \eg TACCL~\cite{shah2023taccl}, TE-CCL~\cite{liu2024teccl}, SyCCL~\cite{cao2025syccl}. Note that FAST originally targets a two-hierarchical network comprising both the scale-up and scale-out.
(3) Optimization frameworks: DirectConnect~\cite{zhao2025direct-connect} proposes both Generalized Kautz and Circulant graphs with a multi-commodity flow routing formulation solved by linear programs. TopoOpt~\cite{wang2023topoopt} proposes union of shifted rings with $k$-shortest paths (for the model-parallel traffic that are closer to our workloads than all-reduce, which involves data dependency). We set $k=1$.
We apply our exact or close approximate scheduling when the baseline itself does not include a schedule.
We refer to our strategy as \textsf{ReconfigureDuringCollective}.
% \vspace{-0.2in}

\begin{figure*}[tp]
\centering
\includegraphics[width=0.98\textwidth]{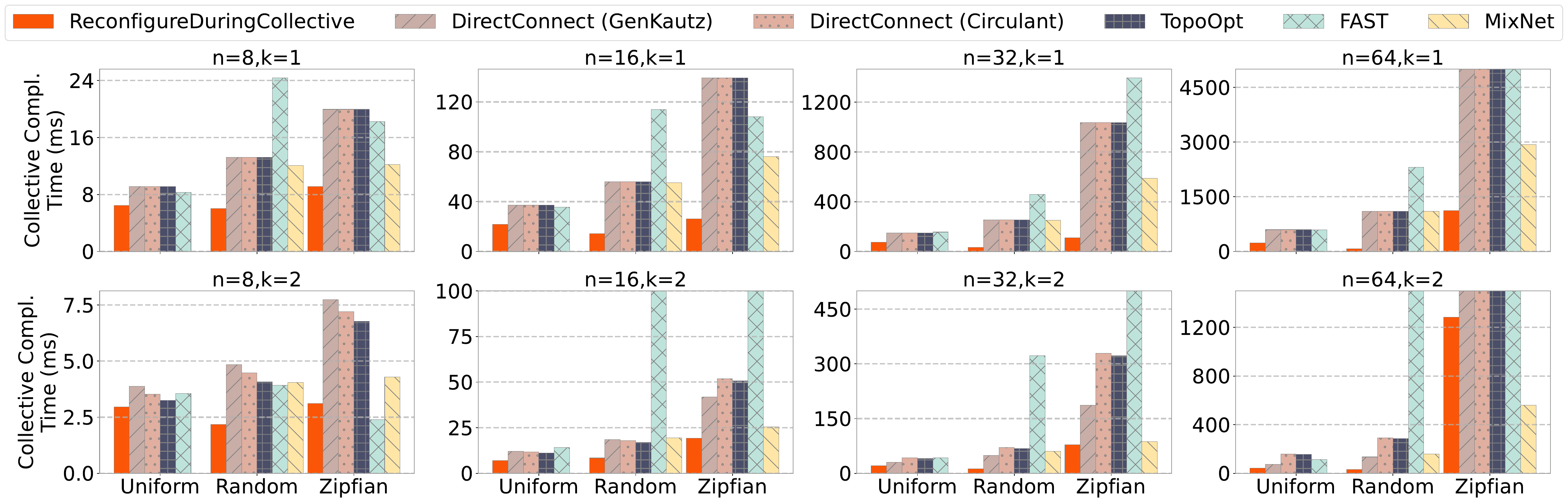}
\vspace{-0.15in}
\caption{Reconfiguring during the collective substantially reduces collective completion time across a wide range of scale-up configurations and workloads. \textsf{ReconfigureDuringCollective} performs the best for uniform and random traffic and remain competitive under skewed Zipfian traffic.}
\label{fig:eval_main}
\vspace{-0.15in}
\end{figure*}

\vspace{-0.05in}
\subsection{Performance} \label{sec:eval_main}
\vspace{-0.05in}

\myitem{We outperform baselines across a wide range of scale-up configurations under different workloads.}
Figure~\ref{fig:eval_main} reports the total collective completion time, computed as the sum of the total reconfiguration time and the data transmission time, across eight scale-up configurations with different numbers of GPUs ($n$) and optical switches ($k$) under three workloads. For illustrative purposes, we truncate exceptionally tall bars; truncated bars therefore appear to hit the top of a subfigure.
This figure is intended to show the maximum achievable benefit of allowing reconfiguration during the collective. For each configuration, we generate the best strategy for every candidate number of reconfigurations, compute its total cost, and then sweep a range of per-reconfiguration costs. We experiment with the strategy that attains the largest cost benefit over this sweep, and we use the same per-reconfiguration cost values when evaluating baselines.
For this plot, we fix flow sizes to 32MB. For random and Zipfian workloads, we have applied our relabeling optimization.
Across all eight scale-ups and three workloads, we achieve 39.66\% lower collective completion time on average relative to the best baseline in each setting. The gains generally increase as the $n/k$ ratio grows.
Breaking down by workload, we perform best under Uniform and Random traffic, and remain competitive under Zipfian traffic, though the improvement is smaller. This is expected: Zipfian demand is more skewed, which breaks the symmetry inherent to all-to-all communication that our design leverages.
MixNet, a greedy scheme that better matches skew, indeed outperforms us in one case ($n{=}64$, $k{=}2$, Zipfian). FAST, which depends on a BvN decomposition, exhibits high variance: it can perform very poorly in many settings, particularly at larger scales, but it also surpasses us in one case ($n{=}8$, $k{=}2$, Zipfian).
Finally, when $k{=}1$, DirectConnect (GenKautz), DirectConnect (Circulant), and TopoOpt coincide because the network effectively reduces to a single directed ring (an $n$-cycle). In this case, the linear-program schedule matches shortest-path routing. More broadly, this also points to the fact that without reconfiguration during the collective, there is limited room to optimize, and the gap from their results to our results directly demonstrates the power of enabling reconfiguration during collectives.

\begin{figure*}[tp]
\centering
\includegraphics[width=0.98\textwidth]{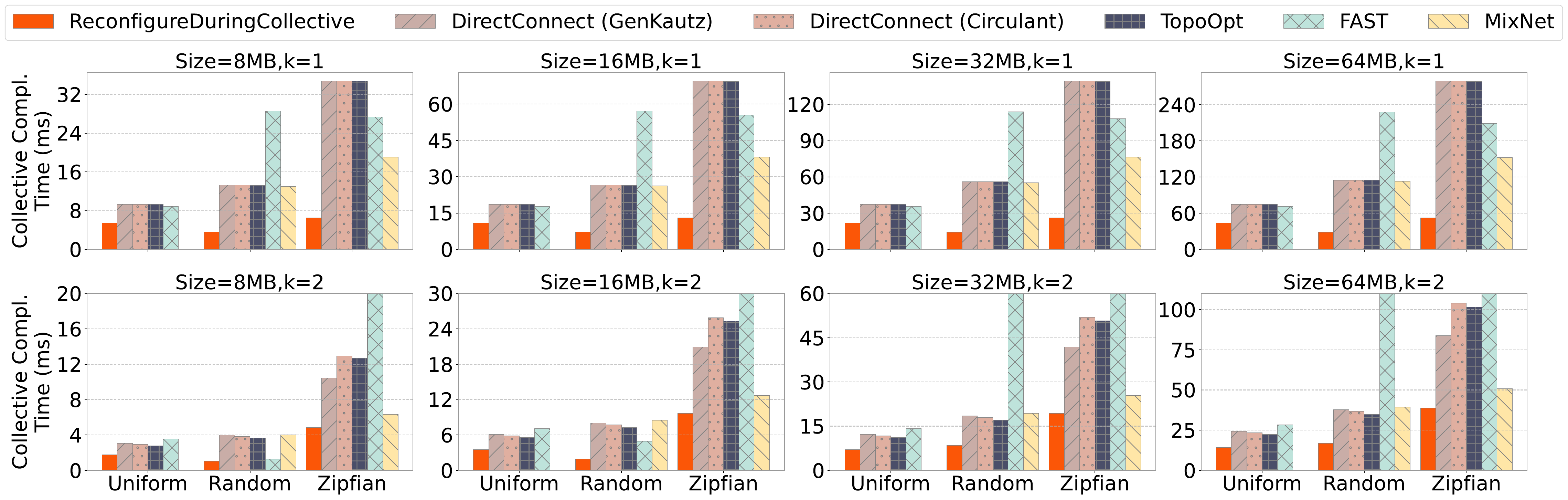}
\vspace{-0.15in}
\caption{Reconfiguring during the collective consistently reduce collective completion time across flow sizes, because contention-free scheduling prevents larger flows from creating additional contention.}
\label{fig:eval_main_size}
\vspace{-0.1in}
\end{figure*}

\myitem{We outperform baselines across different flow sizes under different workloads.}
Figure~\ref{fig:eval_main_size} reports the total collective completion time across a range of flow sizes under three workloads, for $k=1$ and $k=2$ with $n=16$ (we observe the same trend for other $n$). As in Figure~\ref{fig:eval_main}, we report the maximum attainable gain from reconfiguring during collectives (\ie sweeping the per-reconfiguration cost) and truncate exceptionally high bars for illustrative purposes. Across all flow sizes, our approach consistently outperforms the baselines, yielding an average 47.35\% reduction in completion time. The improvement persists as flows grow because our contention-free schedules prevent additional offered load from translating into additional contention.

% \vspace{-0.2in}

\vspace{-0.15in}
\subsection{Optimality Gap} \label{sec:eval_optimality}
\vspace{-0.05in}

\begin{figure}[tp]
  \centering

  \begin{subfigure}{\linewidth}
    \centering
    \includegraphics[width=0.5\linewidth]{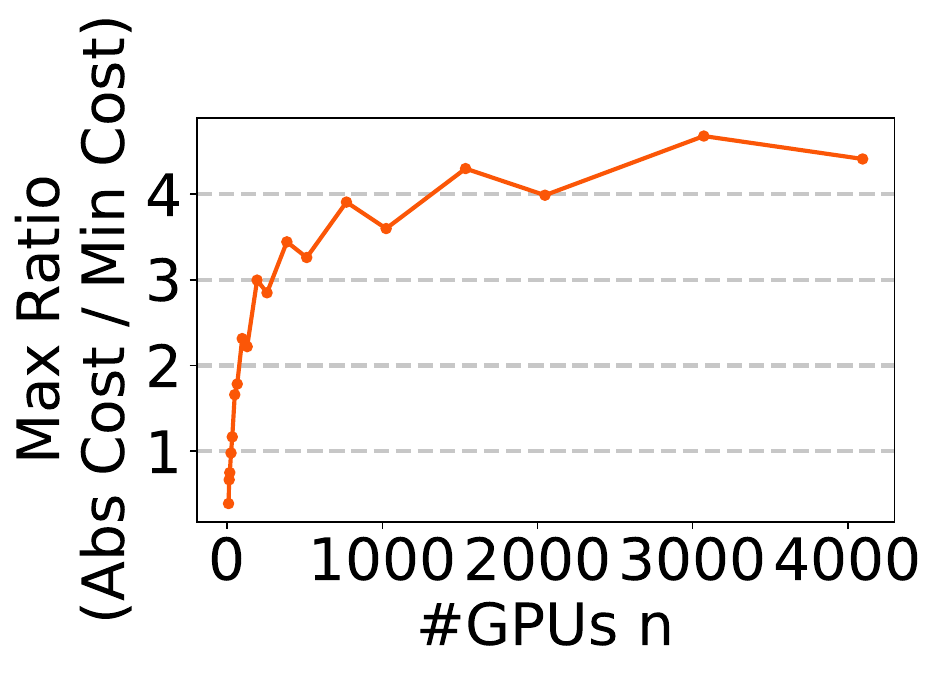}
    \vspace{-0.1in}
    \caption{The worst-case gap between our generated strategies and the lower bound.}
    \label{fig:eval_optimality_ratio}
  \end{subfigure}

  % \vspace{0.6em} % adjust vertical gap

  \begin{subfigure}{\linewidth}
    \centering
    \includegraphics[width=0.98\linewidth]{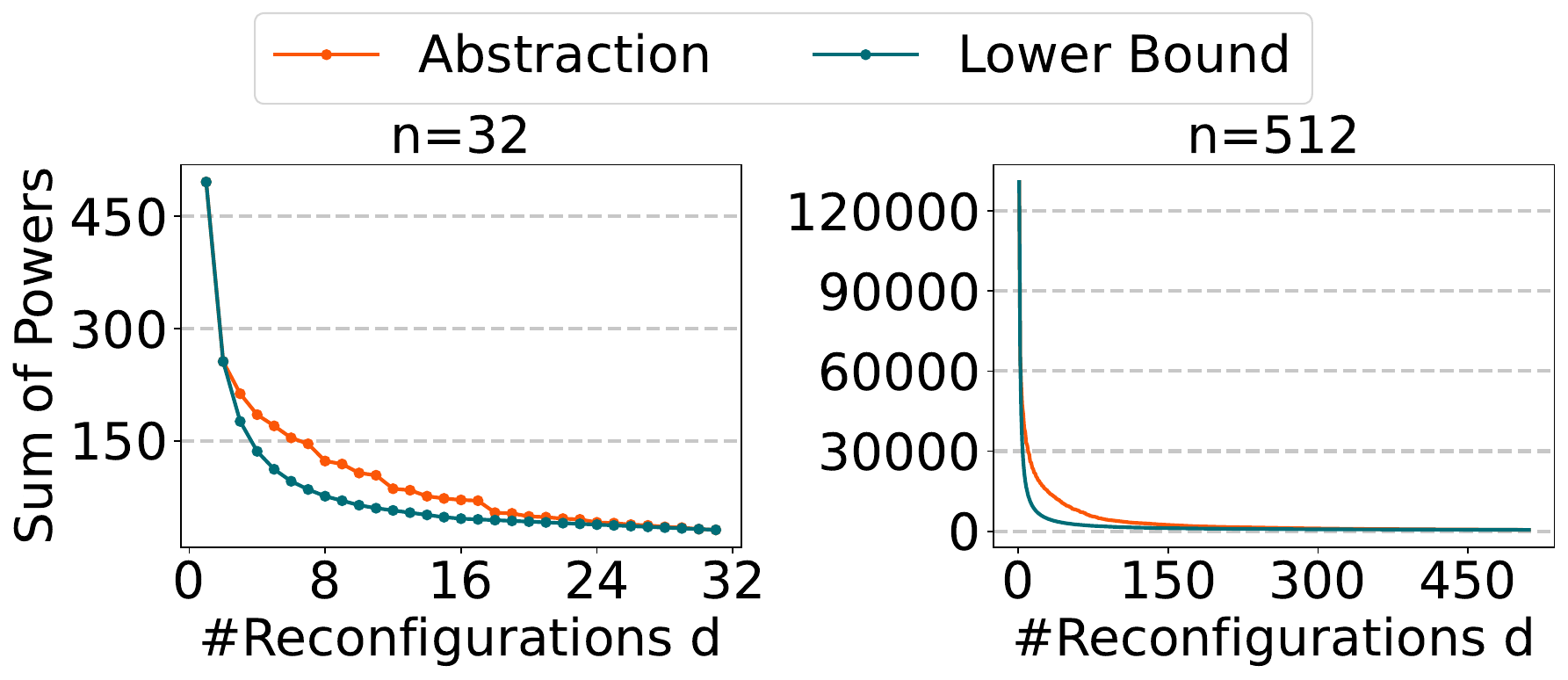}
    \vspace{-0.1in}
    \caption{Examples of how the cost of our generated strategies compares to the lower bound across all candidate $d$ values.}
    \label{fig:eval_optimality_powersum}
  \end{subfigure}
\vspace{-0.15in}
  \caption{Strategies generated based on the selected region stay within a small constant factor of the lower bound for the global optimum ($k$=1). This is illustrated by the small max ratio values in (\ref{fig:eval_optimality_ratio}) and the small gap between the two lines in (\ref{fig:eval_optimality_powersum}).}
  \label{fig:eval_optimality}
  \vspace{-0.2in}
\end{figure}

\myitem{Topology and schedule sequences generated based on the region with highly symmetrical and expandable topologies stay within a small factor of the global-optimum lower bound.}
Figure \ref{fig:eval_optimality} evaluates  how close the topology and schedule sequences generated based on our selected region, characterized by highly symmetrical and expandable topologies, come to the lower bound for the global optimum \ie Theorem~\ref{thm:lowerbound}. Figure ~\ref{fig:eval_optimality_ratio} summarizes worst-case optimality gaps: for each $n$ value, we compute the maximum ratio between our power sum and the lower bound across all candidate number of reconfigurations ($d$), and we plot this maximum ratio across $n$. Figure~\ref{fig:eval_optimality_powersum} then illustrates two specific examples ($n=32$ and $n=512$, randomly selected), showing our power sum and the lower bound across all $d$ for each $n$.
Two takeaways emerge. 
First, in Figure~\ref{fig:eval_optimality_ratio}, the worst-case gap grows very slowly with scale $n$. Up to $n=4096$, the maximum ratio remains $\leq 4.54$ (4096 GPUs inside the same server is what Google uses in its TPUv4~\cite{jouppi2023tpu}); at practical sizes $n \le 64$, it is at most 2.22. In other words, sequences generated by our abstraction are within a small constant factor of the lower bound—and this bound is not necessarily achievable in every problem instance—highlighting the effectiveness of our selected region.
Second, Figure~\ref{fig:eval_optimality_powersum} shows that the lower bound decreases sharply for small $d$ and then flattens as $d$ increases. This matches Theorem~\ref{thm:lowerbound}, which predicts diminishing returns from additional reconfigurations when $d$ gets relatively large and each additional reconfiguration saves the cost of one hop. Practically, this emphasizes the importance of performing well at small $d$: when only a few reconfigurations are allowed (the most constrained setup), our generated sequences should still capture most of the available benefit, ensuring reconfiguration remains worthwhile even when $d$ is small.
% \vspace{-0.15in}

\vspace{-0.15in}
\subsection{Impact of Per-Reconfiguration Cost} \label{sec:eval_reconfig}
\vspace{-0.05in}

\begin{figure}[tp]
  \centering

  \begin{subfigure}{\linewidth}
    \centering
    \includegraphics[width=0.8\linewidth]{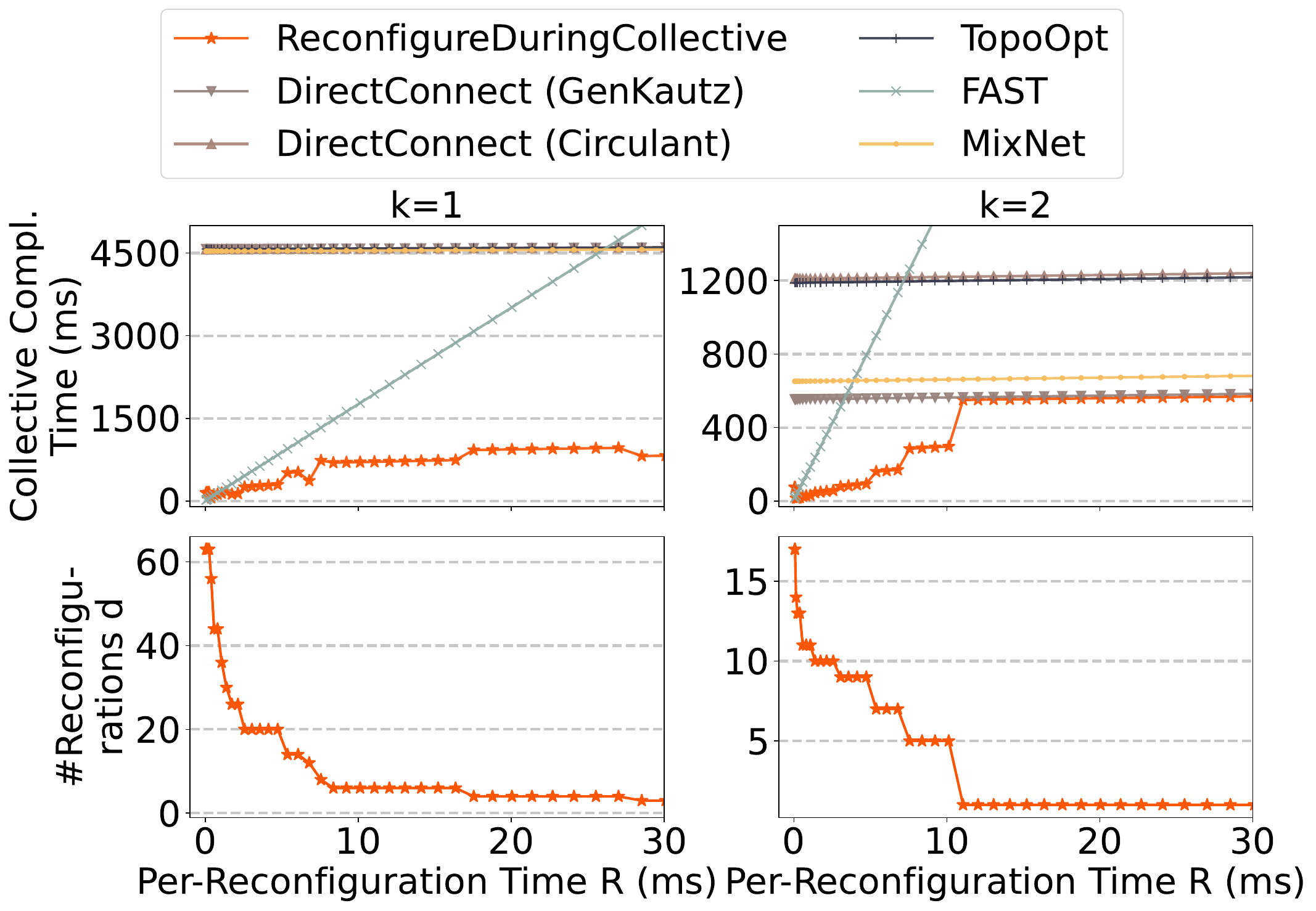}
    \vspace{-0.1in}
    \caption{$n=64$, size=128MB}
    \label{fig:eval_reconfig_n64_s128}
  \end{subfigure}

  % \vspace{0.6em} % adjust vertical gap

  \begin{subfigure}{\linewidth}
    \centering
    \includegraphics[width=0.8\linewidth]{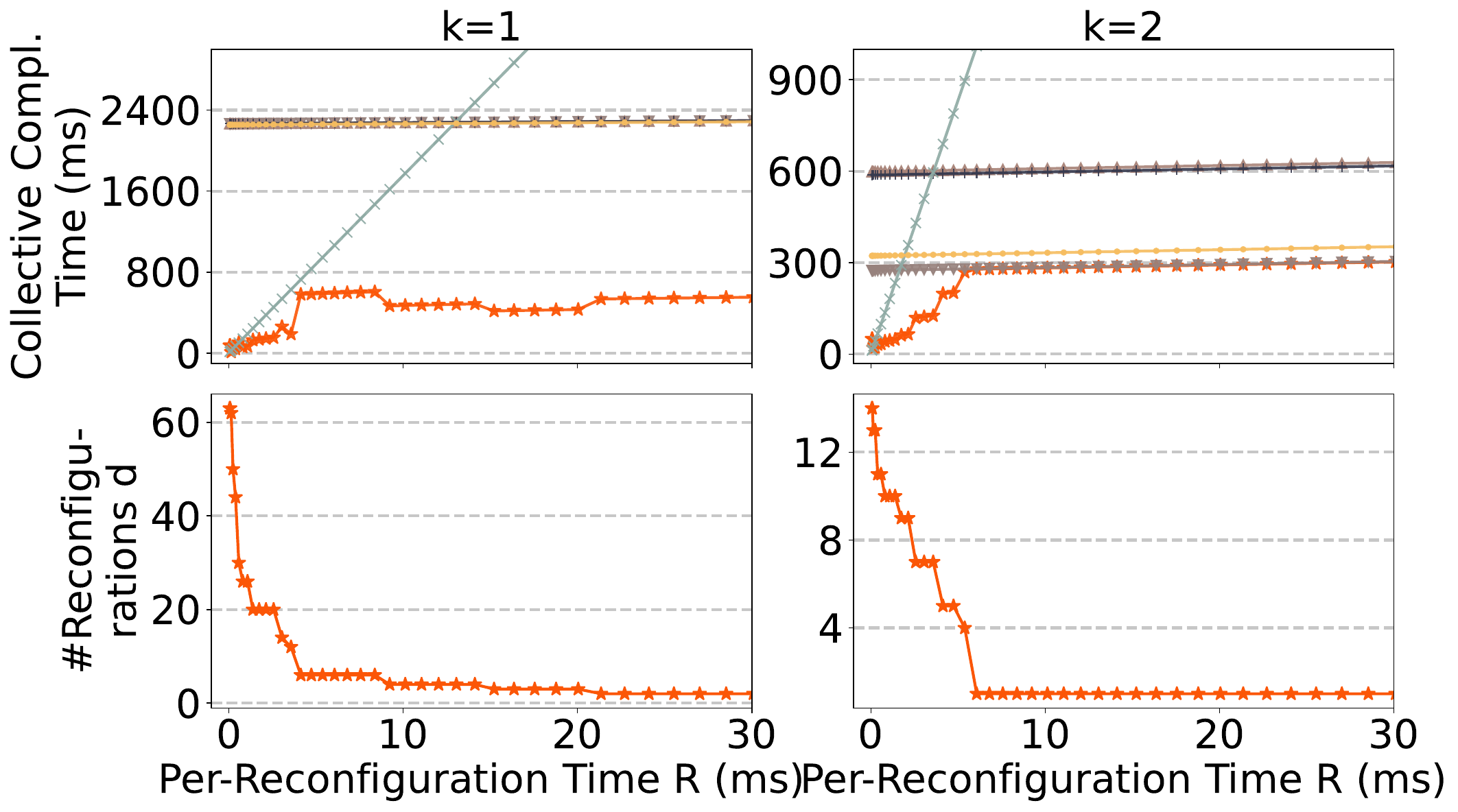}
    \vspace{-0.1in}
    \caption{$n=64$, size=64MB}
    \label{fig:eval_reconfig_n64_s64}
  \end{subfigure}

  % \vspace{0.6em} % adjust vertical gap

  \begin{subfigure}{\linewidth}
    \centering
    \includegraphics[width=0.8\linewidth]{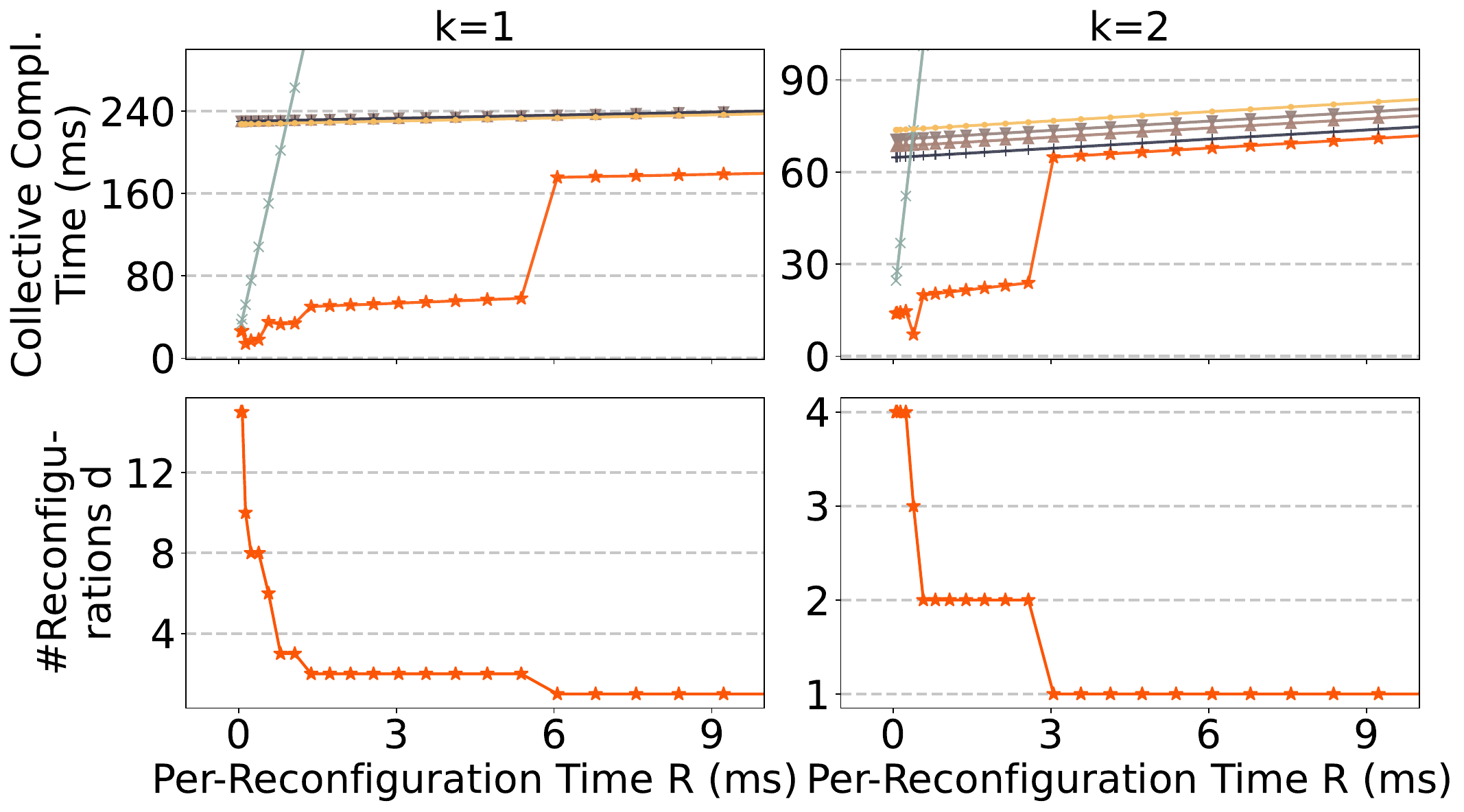}
    \vspace{-0.1in}
    \caption{$n=16$, size=128MB}
    \label{fig:eval_reconfig_n16_s128}
  \end{subfigure}
\vspace{-0.2in}
  \caption{\textsf{ReconfigureDuringCollective} selects the right number of reconfigurations, outperforming baselines over a broad, realistic range of per-reconfiguration costs.}
  \label{fig:eval_reconfig}
  \vspace{-0.2in}
\end{figure}

\myitem{We outperform baselines across a wide range of realistic per-reconfiguration costs by adapting the number of reconfigurations.} 
Figures~\ref{fig:eval_main} and~\ref{fig:eval_main_size} report the best possible improvement (\ie the maximum gain after sweeping the per-reconfiguration delay). Here we instead study the operational setting where the per-reconfiguration cost is given. After generating the best topology/schedule sequence for each candidate reconfiguration count $d$, we use our cost model to select the $d$ that minimizes predicted completion time for the specified $R$. Figure~\ref{fig:eval_reconfig} plots the resulting collective completion time for the random workload, and also provides intuition for how the chosen $d$ changes with $R$.
Figure~\ref{fig:eval_reconfig_n64_s128} fixes $n{=}64$ and flow size 128MB. As $R$ increases, our selected $d$ decreases, as expected. Moreover, the transition to using only a static topology happens earlier when $k$ is larger (\eg $k{=}2$ vs. $k{=}1$). This is because higher $k$ values indicate higher network capacities, which reduces the transmission time, so reconfiguration becomes unprofitable at a smaller $R$. This trend is consistent across the other settings as well.
The curves are not perfectly smooth because the realized transmission time can vary slightly even with chunk scheduling; occasionally, a neighboring $d$ would have been marginally better than the one selected by the model. Even so, we retain a substantial margin over the baselines. Notably, even at $d{=}1$ (no reconfiguration after the initial topology), we can still outperform some baselines due to our better base topology choice.
Comparing Figure~\ref{fig:eval_reconfig_n64_s64} (64MB flows) against the 128MB case illustrates the expected shift: when flow size is smaller, the transmission time is also smaller, and only smaller $R$ values justify reconfiguration, so the “worthwhile” region moves left. The same effect appears when reducing scale (\eg $n{=}16$, shown in Figure~\ref{fig:eval_reconfig_n16_s128}): with a smaller topology, paths are shorter and transmission time is smaller accordingly, so reconfiguration is only beneficial at correspondingly smaller $R$.
Finally, these plots also clarify where different baselines excel. FAST performs best when $R$ is very small, while fixed-topology designs (DirectConnect, TopoOpt and MixNet) tend to be strongest when $R$ is very large, which is consistent with their underlying assumptions. In contrast, by adapting $d$, we achieve strong performance at both extremes and, importantly, maintain a clear advantage in the intermediate range of $R$ that is representative of practical settings.
\section{Conclusion}
This paper presents a unifying matrix-based formulation for all-to-all communication over reconfigurable photonic interconnects that captures the full space of topology sequences and flow schedules. Guided by this abstraction, we identified symmetric, high-expansion topology sequences that, together with a simple flow-splitting heuristic, yield strategies that are near optimal in completion time.
This work does not raise any ethical issues.

\clearpage
\bibliographystyle{ACM-Reference-Format}
\bibliography{reference}

\end{document}